\spnewtheorem{fact}[theorem]{Fact}{\bfseries}{\itshape}
\spnewtheorem{notation}[theorem]{Notation}{\bfseries}{\itshape}
\spnewtheorem{hypothesis}[theorem]{Assumption}{\bfseries}{\itshape}
\renewcommand{\leq}{\leqslant}
\renewcommand{\geq}{\geqslant}
\renewcommand{\ge}{\geqslant}
\newcommand{\eqdef}{\stackrel{\text{def}}{=}}
\newcommand{\F}{\ensuremath{\mathbb{F}}}
\newcommand{\prob}{\ensuremath{\textsf{Prob}}}
\newcommand{\code}[1]{\ensuremath{\mathscr{#1}}}
\newcommand{\Csec}{\code{C}_{\text{sec}}}
\newcommand{\Cpub}{\code{C}_{\text{pub}}}
\newcommand{\Crand}{\code{C}_{\text{rand}}}
\newcommand{\AC}{\code{A}}
\newcommand{\BC}{\code{B}}
\newcommand{\CC}{\code{C}}
\newcommand{\sqc}[1]{#1^2} 
\newcommand{\sqcp}[1]{\left(#1\right)^2}  
\newcommand{\stp}[2]{#1\cwp#2}   
\newcommand{\scp}[2]{#1\cdot #2} 
\newcommand{\cwp}{\star} 
\newcommand{\word}[1]{\vec{#1}}
\newcommand{\av}{\word{a}}
\newcommand{\bv}{\word{b}}
\newcommand{\alphav}{\word{\alpha}}
\newcommand{\betav}{\word{\beta}}
\newcommand{\cv}{\word{c}}
\newcommand{\ev}{\word{e}}
\newcommand{\mv}{\word{m}}
\newcommand{\pv}{\word{p}}
\newcommand{\uv}{\word{u}}
\newcommand{\vv}{\word{v}}
\newcommand{\xv}{\word{x}}
\newcommand{\yv}{\word{y}}
\newcommand{\mat}[1]{\ensuremath{\boldsymbol{#1}}}
\newcommand{\Dm}{\mat{D}}
\newcommand{\Gm}{\mat{G}}
\renewcommand{\Im}{\mat{I}}
\newcommand{\Pm}{\mat{P}}
\newcommand{\Pim}{\mat{\Pi}}
\newcommand{\Qm}{\mat{Q}}
\newcommand{\Rm}{\mat{R}}
\newcommand{\Sm}{\mat{S}}
\newcommand{\Tm}{\mat{T}}
\newcommand{\Ind}{{\mathcal I}}
\newcommand{\Jind}{{\mathcal J}}
\newcommand{\Gms}{\mat{G_{sec}}}
\newcommand{\Gmp}{\mat{G_{pub}}}
\newcommand{\Hmp}{\mat{H_{pub}}}
\newcommand{\fq}{\F_{q}}
\newcommand{\GRS}[3]{\text{\bf GRS}_{#1}\left(#2,#3\right)}
\newcommand{\sh}[2]{{\mathcal{S}}_{#1}\left(#2\right)}
\newcommand{\pu}[2]{{\mathcal{P}}_{#1}\left(#2\right)}
\title{\bf A Polynomial-Time Attack on the BBCRS Scheme} 
\author{Alain Couvreur \inst{1}
\and Ayoub Otmani \inst{2} 
\and Jean-Pierre Tillich \inst{3} 
\and  Val\'erie Gauthier--Uma\~na \inst{4}
}
\institute{
INRIA, {\sc Grace} Team \& LIX, CNRS UMR 7161,\\ All\'ee Honor\'e d'Estienne d'Orves, 91120
Palaiseau Cedex, France.\\
\email{alain.couvreur@lix.polytechnique.fr}
\and
University of Rouen,  LITIS, F-76821 Mont-Saint-Aignan, France.\\
\email{ayoub.otmani@univ-rouen.fr}
\and
INRIA, {\sc Secret} Team,
78153 Le Chesnay Cedex, France. \\
\email{jean-pierre.tillich@inria.fr}
\and
Faculty of Natural Sciences and Mathematics,\\ Department of Mathematics, Universidad del Rosario, Bogot\'a, Colombia.\\
\email{gauthier.valerie@urosario.edu.co}
}
\authorrunning{A.~Couvreur, A.~Otmani, J.--P.~Tillich, V.~Gauthier--Uma\~{n}a}
\begin{document}

\maketitle

\begin{abstract}
The BBCRS scheme is a variant of the McEliece public-key encryption scheme where the hiding phase 
is performed by taking the inverse of a matrix which is of the form  $\Tm + \Rm$ where
$\Tm$ is a sparse matrix with average row/column weight equal to a very small quantity $m$, usually $m < 2$,
and $\Rm$ is a matrix of small rank $z\ge 1$. The rationale of this new transformation is the reintroduction  
of families of codes, like generalized Reed-Solomon codes, that are famously known for representing 
insecure choices. We present a key-recovery attack when $z =1$ and $m$ is chosen between $1$ and $1+R+O(\frac{1}{\sqrt{n}})$
where $R$ denotes the code rate. This attack has complexity $O(n^6)$ and 
breaks all the parameters suggested in the literature.
\end{abstract}

\noindent \textbf{Keywords.} Code-based cryptography; distinguisher; generalized Reed-Solomon codes; key-recovery;  component-wise product of codes.

\section*{Introduction}
{\bf Post-quantum cryptography.} All public key cryptographic primitives used in practice such as RSA, ElGamal scheme, DSA or ECDSA rely either on the difficulty of factoring or computing the discrete logarithm and would therefore be broken by Shor's algorithm \cite{Sho94a}
if a large enough quantum computer could be built. Moreover, even if a large enough  quantum computer might not be built in the next five years, it should be mentioned that tremendous
progress has been made for computing the discrete logarithm over finite fields of small characteristic with the quasi-polynomial time algorithm of \cite{BGJT14a}.  
This lack of diversity in public key cryptography has been identified as a major concern in the field of information security.
For all these reasons, it would be very desirable to be ready to replace these schemes by others
that would  rely on other hard problems. However only few other proposals have emerged which are essentially hash-based
 signature schemes, lattice-based, code-based and multivariate quadratic based schemes.
 They are  either based on the problem of solving multivariate equations over a finite field, the problem of finding a short vector in a lattice and the problem of decoding a linear code. Those problems are known for being NP-hard and are therefore  believed to be immune to the quantum computer threat.

{\bf The McEliece cryptosystem.}
Among those, one of the most promising scheme is the McEliece public key cryptosystem \cite{McEliece78}. It  is also one of the oldest public-key cryptosystem. 
It uses a family of codes for which there is a fast decoding algorithm (the binary Goppa code family here) which is used in the decryption process whereas an attacker 
has only a random generator matrix of the Goppa code  which reveals nothing about the algebraic structure of the Goppa code that is used in the decoding process. He has therefore 
to decode a generic linear code for which only exponential time decoding algorithms are known.  The main advantage of this system is to have very fast encryption and decryption functions. Depending on how the parameters are chosen for a fixed security level, this cryptosystem is about five times faster for encryption and about 10 to 100 times faster for decryption than RSA \cite{Sen08}. Furthermore, it has withstood many attacking attempts. After more than thirty five years now, it still belongs to the very few public key cryptosystems which remain unbroken. 

{\bf The use of Reed-Solomon codes in a McEliece scheme.}
Goppa codes are subfield subcodes of Generalized Reed-Solomon codes (GRS codes in short). This means that a Goppa code defined over $\fq$ is actually the set of codewords of 
a GRS code defined over an extension field $\F_{q^\mu}$ (we say that $\mu$ is the extension degree of the Goppa code) whose
coordinates all belong to the subfield $\fq$. Actually the fast decoding process of Goppa codes is the decoder of the underlying GRS code.
Roughly speaking, a Goppa code of length $n$ and dimension $n-2t\mu$ defined over $\fq$ can correct $t$ errors\footnote{but the dimension can be increased to
$n-t\mu$ in the binary case} and  is a subfield subcode of a GRS code that can also correct $t$ errors
which is of the same length $n$ but has a larger dimension  $n-2t$ and is defined over $\F_{q^\mu}$. In this sense, the underlying GRS code 
has a better error correction capacity than the Goppa code. This raises the issue of using GRS codes instead of Goppa codes
in the McEliece system. The better decoding capacity of GRS codes translates into smaller public key sizes for the McEliece scheme which is actually one of the main drawback of this scheme. 
This approach has been tried in Niederreiter's scheme (whose security is equivalent to the
McEliece scheme) but has encountered a dreadful fate when 
the Sidelnikov-Shestakov attack appeared \cite{SidelShesta92}. 

{\bf Baldi et al. approach for reviving GRS codes.} 
In their Journal of Cryptology article \cite{BBCRS11b}, Baldi \textit{et al.} have suggested a new way of using GRS codes in this context.
Instead of using directly such a code, they multiplied it by the inverse of the sum $\Tm + \Rm$ where $\Tm$ is a sparse matrix 
and $\Rm$ is a low rank matrix. By doing this, the attacker sees a code which is radically different from a GRS code but the
legitimate user can still use the underlying GRS decoder. This thwarts the Sidelnikov-Shestakov attack completely. 
However the decoding capacity of the resulting code is basically scaled down by a factor of 
$\frac{1}{m}$  where $m$ denotes the average weight of rows of the matrix $\Tm$.
It should be noted that the very same approach has also been tried for the Low-Density-Parity-Check code family, LDPC 
in short,
which is notoriously known for being insecure in a McEliece scheme \cite{MRS00,BaldiChiara07,BaldiChiara08}. 
In this case, they did not even use the low rank matrix and despite of this fact the resulting public code obtained by this multiplication is not 
an LDPC code anymore (it becomes a moderate-density-parity-check code) and it seems now that if the attacker wants to break this scheme
he has to be able to solve a generic decoding problem \cite{MTSB13}.
There are therefore good reasons to believe that this approach can be 
powerful for disguising the secret code structure.

{\bf An earlier attempt.} Baldi \textit{et al.} \cite{BBCRS11a} first used this approach with $\Tm$ being a permutation matrix. In this case $m=1$ and 
nothing is lost in term of decoding capacity compared to a GRS decoder. In other words, this allows to decrease the public key size as if we had
a GRS code in the McEliece cryptosystem. This first attempt got broken in \cite{CGGOT13,CGGOT:dcc14}. 
Roughly speaking the reason of this attack in this case can be traced back to two facts (i) it turns out that the 
resulting code is still close to the underlying GRS code: the intersection of the public code with the secret GRS code is of co-dimension one;
(ii) there is a very powerful way of distinguishing a GRS code \cite{CGGOT13} from a random code by computing the dimension of its square
which can be used to unravel the algebraic structure of the public code.
On the other hand, when the degree of sparseness of $\Tm$ is $>1$ the resulting code does not have a large intersection with a GRS code and there was some hope to obtain a secure scheme.

{\bf Our contribution: an attack which works in the regime $1 < m < 2$.}
In the present article we will show that despite the fact that the public code is far from being a GRS code, a similar trick that has already
been used to attack successfully in \cite{COT:EC14} some wild Goppa codes proposed in \cite{BLP10:wild} 
when the degree of extension is only $2$ can also be used in this context.
It consists in computing the
dimension of the square of shortenings of the public
code. Because of the hidden structure of the public code, the squares of
some of its shortenings have a smaller dimension than the squares of shortened
random codes of the same dimension.
This distinguisher is then used to unravel the structure of the matrix $\Tm$. This gives an attack of polynomial time complexity which 
can be used to break the examples given in \cite{BBCRS11b}. Several were broken in a few hours, and others in a few days. As an illustration, Example~1 given in \cite{BBCRS11b} with a claimed 90-bit security 
can be broken in 2.75 hours on  a computer equipped with Xeon 2.27GHz processor and 72 Gb of RAM.
This attack works up to values of $m$ of order $1+R+O(\frac{1}{\sqrt{n}})$, where $R$ is the rate of the public code.
The attack we present here can obviously be thwarted by taking values for $m$ greater than $2$,  
but in this case, since the price to pay is 
a decrease of the decoding capacity by a factor of more than $2$, we do not obtain better public key 
sizes than the ones we obtain by using Goppa codes, or more generally alternant
codes of extension degree $2$, 
provided we choose non wild Goppa codes in order to avoid the attack of \cite{COT:EC14}. 
The complexity of the present attack is similar to that of \cite{CGGOT:dcc14}, 
namely $O(n^6)$ where $n$ is the code length.
More precisely, this attack starts with two steps of respective complexity $O(n^3)$ and $O(n^5)$
and then applying the attack of \cite{CGGOT:dcc14} whose complexity is $O(n^6)$  operations in the base field.


\section {GRS Codes and the Square Code Construction}
  \label{sec:basics}

We recall in this section a few relevant results and definitions from coding theory and bring in the
fundamental notion
of square code construction.

\begin{definition}[Generalized Reed-Solomon code] \label{defGRS}
Let $k$ and $n$ be integers such that $1 \leqslant k < n \leqslant q$ where $q$ is a 
prime power.
The
code $\GRS{k}{\xv}{\yv}$ of dimension $k$ is associated to a pair
$(\xv,\yv) $ where $\xv$ is an $n$-tuple of distinct elements of
$\fq$  and
$\yv \in {(\fq^{\times})}^n$,
is 
defined as: 
$$
\GRS{k}{\xv}{\yv} \eqdef
\Big\{(y_1p(x_1),\dots{},y_np(x_n)) ~|~ p \in \fq[X], \deg p < k\Big\}.
$$
\end{definition}

The first work that suggested to use GRS codes in a public-key 
encryption
scheme 
was \cite{Nie86}. But Sidelnikov and Shestakov  
\cite{SidelShesta92} 
showed that for any GRS code it is 
possible  to recover in polynomial time a pair $(\xv,\yv)$ defining it,
which is all that is needed to decode efficiently such codes and is therefore enough to break
any McEliece type cryptosystem \cite{McEliece78} that uses GRS codes.

\begin{definition}[Componentwise products]
  Given two vectors $\av=(a_1, \dots, a_n)$ and $\bv=(b_1, \dots, b_n)
  \in \fq^n$, we denote by $\av \cwp \bv $ the componentwise product
 $$
 \av \cwp \bv \eqdef (a_1b_1,\dots{},a_n b_n).
 $$
\end{definition}
The star product $\av \cwp \bv$ should be distinguished from a more common operation,
namely the canonical inner product:
  $$
  \scp{\av}{\bv}\eqdef \sum_{i=1}^n a_i b_i.
  $$

\begin{definition}[Product of codes \& square code]
Let $\code{A}$ and $\code{B}$ be two codes of length $n$. The
\emph{star product code} denoted by $\code{A} \cwp \code{B}$ of $\code{A}$ and $\code{B}$ is \emph{the vector space
spanned by all products} $\av \cwp \bv$ where $\av$ and $\bv$ range over $\code{A}$ and $\code{B}$ respectively.
When $\code{B} = \code{A}$ then  $\code{A} \cwp \code{A}$ is called the \emph{square code} of $\code{A}$
and is rather denoted by $\sqc{\code{A}}$.
\end{definition}

\begin{proposition}\label{prop:dimprod}
Let $\code{A}$ be a code of length $n$, then
$$ \dim (\sqc{\code{A}}) \leq \min \left\{ n, \binom{\dim(\code{A})+1}{2}\right\}.$$
\end{proposition}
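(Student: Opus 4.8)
The plan is to establish the two bounds appearing in the minimum separately. The bound $\dim(\sqc{\code{A}}) \leqslant n$ is immediate from the definition: $\sqc{\code{A}}$ is by construction a linear subspace of $\fq^n$, and every subspace of $\fq^n$ has dimension at most $n$.

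For the bound $\dim(\sqc{\code{A}}) \leqslant \binom{\dim(\code{A})+1}{2}$, write $k = \dim(\code{A})$ and fix a basis $\av_1,\dots,\av_k$ of $\code{A}$. First I would note that the operation $\cwp$ is bilinear and symmetric, so that for arbitrary elements $\av = \sum_{i=1}^k \lambda_i \av_i$ and $\bv = \sum_{j=1}^k \mu_j \av_j$ of $\code{A}$ one has
$$
\av \cwp \bv \;=\; \sum_{1\leqslant i,j\leqslant k} \lambda_i \mu_j\,(\av_i \cwp \av_j) \;=\; \sum_{1\leqslant i\leqslant j\leqslant k} \nu_{ij}\,(\av_i \cwp \av_j)
$$
for suitable scalars $\nu_{ij}$, where in the last step symmetry lets us merge the $(i,j)$ and $(j,i)$ terms. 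Consequently the generating family $\{\av \cwp \bv : \av,\bv \in \code{A}\}$ of $\sqc{\code{A}}$ lies in the span of the finite family $\{\av_i \cwp \av_j : 1 \leqslant i \leqslant j \leqslant k\}$, and hence $\sqc{\code{A}}$ is spanned by this latter family.

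It then remains to count the spanning vectors: the number of pairs $(i,j)$ with $1 \leqslant i \leqslant j \leqslant k$ equals $\binom{k}{2} + k = \binom{k+1}{2}$, which therefore bounds $\dim(\sqc{\code{A}})$. Combining the two bounds yields the claim. There is no genuine obstacle in this argument; the only point that deserves an explicit line is that the symmetry of $\cwp$ allows restriction to the pairs with $i \leqslant j$, which is exactly what turns the crude count $k^2$ into $\binom{k+1}{2}$.
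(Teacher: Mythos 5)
Your argument is correct and is precisely the standard proof of this fact: the paper itself does not spell out a proof but refers to \cite{CGGOT:dcc14}, where the same reasoning (the square code is a subspace of $\fq^n$, and by bilinearity and symmetry of $\cwp$ it is spanned by the $\binom{k+1}{2}$ products $\av_i \cwp \av_j$ with $i \leq j$) is used. Nothing is missing.
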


\begin{proposition}\label{prop:complexity}
  Let $\code{A} \subset \fq^n$ be a code of dimension $k$.
  The complexity of the computation of a basis of
  $\sqc{\code{A}}$ is $O(k^2n^2)$ operations in $\fq$.
\end{proposition}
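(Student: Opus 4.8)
The plan is to exploit the bilinearity of the star product to reduce the problem to a generating set of controlled size, and then to extract a basis by linear algebra. Fix a basis $\gv_1,\dots,\gv_k$ of $\code{A}$ (the generator matrix supplied with the code). Since the map $(\av,\bv)\mapsto \av\cwp\bv$ is bilinear and symmetric, every product of two codewords of $\code{A}$ is a linear combination of the products $\gv_i\cwp\gv_j$ with $1\le i\le j\le k$. Consequently
$$
\sqc{\code{A}} = \big\langle\, \gv_i\cwp\gv_j \ :\ 1\le i\le j\le k \,\big\rangle,
$$
a space spanned by exactly $\binom{k+1}{2}$ explicit vectors of $\fq^n$.

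First I would compute all these generators. Each product $\gv_i\cwp\gv_j$ requires $n$ multiplications in $\fq$, one per coordinate, and there are $\binom{k+1}{2}=O(k^2)$ of them, so forming the full generating set costs $O(k^2 n)$ operations. Next I would stack the $N\eqdef\binom{k+1}{2}$ generators as the rows of an $N\times n$ matrix over $\fq$ and run Gaussian elimination to bring it to row echelon form; the nonzero rows then constitute a basis of $\sqc{\code{A}}$.

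The dominant cost is the Gaussian elimination, and the analysis hinges on the dimension bound already established. The matrix has $N=O(k^2)$ rows, but the number of pivot steps equals $r=\dim\sqc{\code{A}}$, which by Proposition~\ref{prop:dimprod} satisfies $r\le n$. Each pivot step eliminates its pivot column from the remaining $O(k^2)$ rows at a cost of $O(n)$ operations per row, i.e. $O(k^2 n)$ per pivot. The total is therefore $O(r\cdot k^2 n)\le O(k^2 n^2)$, which absorbs the $O(k^2 n)$ spent forming the products and yields the announced bound.

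The only subtle point — and the place where a careless count would fail — is to bound the number of pivots by $r\le n$ through Proposition~\ref{prop:dimprod}, rather than by the number of rows $N=O(k^2)$. The naive estimate "up to $N$ pivots, each costing $O(Nn)$" gives $O(N^2 n)=O(k^4 n)$, which is strictly worse than $O(k^2 n^2)$ whenever $k^2>n$. It is precisely the a priori bound $\dim\sqc{\code{A}}\le n$ that caps the number of pivots and keeps the power of $n$ at two, giving the claimed $O(k^2 n^2)$.
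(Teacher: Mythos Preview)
Your argument is correct and is essentially the standard one; the paper itself does not spell out a proof but simply refers to \cite{CGGOT:dcc14}, where the same reasoning (form the $\binom{k+1}{2}$ pairwise star products of a basis and row-reduce) appears. One small remark: you do not actually need Proposition~\ref{prop:dimprod} to bound the number of pivots by $n$ --- the matrix you row-reduce has only $n$ columns, so its rank is trivially at most $n$; the cited proposition is rather the point that the resulting square code can be much smaller than $\binom{k+1}{2}$, not the ingredient driving the complexity bound.
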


See for instance \cite{CGGOT:dcc14}, for proofs of Propositions \ref{prop:dimprod} and \ref{prop:complexity}.

The importance of the square code construction becomes clear when we compare
the dimension of the square of 
structured codes like GRS codes with the
dimension of the square of a random code.
Roughly speaking, given a code of dimension $k$,  the dimension of its square
is linear in $k$ if it is a GRS code and quadratic if it is a random code
as explained in the two following  propositions.

\begin{proposition}\label{prop:square} 
$
\GRS{k}{\xv}{\yv}^2 = \GRS{2k-1}{\xv}{\yv\cwp\yv}.
$
\end{proposition}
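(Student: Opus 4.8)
The plan is to prove the two inclusions separately, working with the polynomials that parametrise the codewords. For the inclusion $\GRS{k}{\xv}{\yv}^2 \subseteq \GRS{2k-1}{\xv}{\yv\cwp\yv}$, recall that the square code is spanned by the products $\av \cwp \bv$ with $\av,\bv \in \GRS{k}{\xv}{\yv}$, so it suffices to place each such product in the right-hand side. Writing $\av = (y_1 p(x_1),\dots,y_n p(x_n))$ and $\bv = (y_1 q(x_1),\dots,y_n q(x_n))$ with $\deg p < k$ and $\deg q < k$, one gets $\av \cwp \bv = (y_1^2\, p(x_1)q(x_1),\dots,y_n^2\, p(x_n)q(x_n))$; since $\deg(pq) \le 2k-2 < 2k-1$ and $\yv \cwp \yv = (y_1^2,\dots,y_n^2)$, this is by definition a codeword of $\GRS{2k-1}{\xv}{\yv\cwp\yv}$. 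As that code is a linear space, it contains the whole span, which yields the first inclusion.

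For the reverse inclusion the key point is a statement about polynomials alone: every $r \in \fq[X]$ with $\deg r < 2k-1$ is an $\fq$-linear combination of products $pq$ with $\deg p,\deg q < k$. Indeed, any exponent $\ell$ with $0 \le \ell \le 2k-2$ can be written $\ell = i+j$ with $0 \le i,j \le k-1$, so $X^\ell = X^i \cdot X^j$ is such a product, and the monomials $1,X,\dots,X^{2k-2}$ span the space of polynomials of degree $< 2k-1$. Hence an arbitrary codeword $(y_1^2 r(x_1),\dots,y_n^2 r(x_n))$ of $\GRS{2k-1}{\xv}{\yv\cwp\yv}$ is an $\fq$-linear combination of vectors $(y_1^2\, p(x_1)q(x_1),\dots,y_n^2\, p(x_n)q(x_n)) = \av \cwp \bv$ with $\av,\bv \in \GRS{k}{\xv}{\yv}$, and therefore lies in $\GRS{k}{\xv}{\yv}^2$. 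Combining the two inclusions proves the equality.

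I do not expect a genuine obstacle here: the argument is essentially bookkeeping, and the only points needing a little care are that coordinatewise multiplication by $\yv \cwp \yv$ is a bijection of $\fq^n$ (its entries are nonzero because $\yv \in (\fq^{\times})^n$), so it is compatible with taking linear spans, and that the identity is read in the natural regime $2k-1 \le n$ where the right-hand side is a genuine GRS code in the sense of Definition~\ref{defGRS} (if $2k-1 > n$, both sides simply equal $\fq^n$, since evaluating polynomials of degree $< n$ at the $n$ distinct points $x_1,\dots,x_n$ is already surjective). A useful by-product, used repeatedly in the sequel, is that $\dim \GRS{k}{\xv}{\yv}^2 = \min\{2k-1,n\}$, which grows only linearly in $k$ and thus lies far below the generic bound $\binom{k+1}{2}$ of Proposition~\ref{prop:dimprod}.
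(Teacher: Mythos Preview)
Your argument is correct and is the standard double-inclusion proof of this well-known identity: products of low-degree polynomials land in the right degree range, and conversely every monomial $X^\ell$ with $\ell\le 2k-2$ factors as $X^i X^j$ with $i,j\le k-1$. The paper itself does not give a proof but simply refers to \cite[Proposition~10]{MMP11a:DCC}, so you are in fact supplying more detail than the paper does; your handling of the boundary case $2k-1>n$ (both sides equal to $\fq^n$) is also fine and matches the spirit of the remark that follows the proposition.
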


\begin{proof}
See for instance \cite[Proposition 10]{MMP11a:DCC}.
\end{proof}

\begin{remark}
This property can also be used in the case $2k-1 >n$. To see this,
consider the dual of the Reed-Solomon code, which is itself a generalized Reed-Solomon code 
\cite[Theorem 4, p.304]{MacSloBook}.  
\end{remark}

\begin{theorem}
  Let $\code{A}$ be a random code of length $n$ and dimension $k$ such that
$n > {k+1 \choose 2}$. Then, for all integer $\ell<{k+1 \choose 2}$,
$$
\prob \left( \dim \sqc{\code{A}} \leq {k+1 \choose 2} - \ell \right)
= O \left( q^{-\ell} \cdot q^{-\left(n - {k+1 \choose 2}\right)} \right),\qquad (k\rightarrow +\infty).
$$
\end{theorem}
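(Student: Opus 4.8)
The statement is a probabilistic upper bound on the dimension of the square of a random $[n,k]$ code, showing that the square is very likely to have the maximal possible dimension $\binom{k+1}{2}$ (which makes sense because $n>\binom{k+1}{2}$), with a failure probability that decays like $q^{-\ell}$ times a factor $q^{-(n-\binom{k+1}{2})}$ accounting for how much room there is in the ambient space. The plan is to fix a representation of a random code — say by a uniformly random $k\times n$ generator matrix $\Gm$, or equivalently a uniformly random subspace — and to understand the map that sends $\Gm$ to the span of the $\binom{k+1}{2}$ products $\gv_i\star\gv_j$ ($1\le i\le j\le k$) of its rows. Write $N=\binom{k+1}{2}$. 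The square code is the row space of the $N\times n$ matrix $\Pm(\Gm)$ whose rows are these products; its dimension is $N$ unless the rows of $\Pm(\Gm)$ satisfy a nontrivial linear dependence, i.e.\ unless there is a nonzero symmetric $k\times k$ matrix $\Lm=(\lambda_{ij})$ with $\sum_{i\le j}\lambda_{ij}\,(\gv_i\star\gv_j)=\word{0}$ in $\fq^n$. So the event $\dim\sqc{\code{A}}\le N-\ell$ amounts to the existence of an $\ell$-dimensional space of such symmetric dependencies.

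**The core estimate.**

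First I would compute, for a single nonzero symmetric matrix $\Lm$, the probability over random $\Gm$ that $\sum_{i\le j}\lambda_{ij}\,(\gv_i\star\gv_j)=\word{0}$. Coordinatewise this says that for every $t\in\{1,\dots,n\}$ the quadratic form $Q_\Lm(\word{x})=\word{x}^{\mathsf T}\Lm\word{x}$ vanishes at the column vector $\word{x}^{(t)}=(g_{1t},\dots,g_{kt})^{\mathsf T}$. If $\Gm$ is uniform, the columns $\word{x}^{(1)},\dots,\word{x}^{(n)}$ are i.i.d.\ uniform in $\fq^k$, so the probability is $\big(\Pr_{\word{x}}[Q_\Lm(\word{x})=0]\big)^n$. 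The number of zeros of a nonzero quadratic form in $k$ variables over $\fq$ is at most roughly $q^{k-1}+q^{(k-1)/2}$ (standard; worst case is a form of rank $2$ of hyperbolic type, or in characteristic $2$ the analogous count), so $\Pr_{\word{x}}[Q_\Lm(\word{x})=0]\le \frac1q + O(q^{-(k+1)/2})$, hence the per-$\Lm$ probability is at most $q^{-n}(1+o(1))^n$; being a little more careful, it is $\le q^{-n}\cdot 2^{n}$ in the crudest bound, which is not good enough, so one wants the sharper $\frac1q(1+O(q^{-(k-1)/2}))$ and the fact that $(1+O(q^{-(k-1)/2}))^n\to 1$ provided $n$ is not too large — or, better, one restricts attention to the dominant contribution by rank. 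I would stratify the nonzero symmetric matrices $\Lm$ by the rank $r$ of the associated quadratic form: there are $O(q^{rk - r(r-1)/2})$ of them of rank exactly $r$ (a Grassmannian-times-nondegenerate-forms count), and for each the zero-probability is $\le q^{-r}(1+O(q^{-r/2}))$, so the per-$\Lm$ dependency probability is $\le q^{-rn}(1+O(q^{-r/2}))^n$.

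**From one dependency to an $\ell$-dimensional space of dependencies, and the union bound.**

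To bound $\Pr(\dim\sqc{\code{A}}\le N-\ell)$ I would count, via a union bound over $\ell$-dimensional subspaces $V$ of the space of symmetric matrices (there are $O(q^{\ell(N-\ell)})$ of them), the probability that every $\Lm\in V$ is a dependency. Picking a basis $\Lm_1,\dots,\Lm_\ell$ of $V$, the events "$\Lm_j$ is a dependency" are not independent, but I can still bound the joint probability: coordinatewise it requires each column $\word{x}^{(t)}$ to be a common zero of $\ell$ linearly independent quadratic forms, and the common zero locus of $\ell$ independent quadrics in $\fq^k$ has size $\le q^{k-1}\cdot q^{-(\ell-1)}(1+o(1)) = q^{k-\ell}(1+o(1))$ — this is the key geometric input, and it is where I expect the main technical work to sit, since it is false without a genuine-independence hypothesis and one must handle the degenerate strata (forms sharing a common linear factor, small rank, characteristic $2$) carefully. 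Granting it, the joint probability is $\le q^{-\ell n}(1+o(1))^n$, and multiplying by the number $O(q^{\ell(N-\ell)})$ of subspaces gives $O(q^{\ell(N-\ell)}\cdot q^{-\ell n}) = O(q^{-\ell(n-N+\ell)}) = O(q^{-\ell\ell}\cdot q^{-\ell(n-N)})$. Since $\ell\ell\ge\ell$ for $\ell\ge1$ this is $O(q^{-\ell}\cdot q^{-(n-N)})$ after also summing the geometric-type error over the lower-rank strata (which are subsumed because smaller rank forces even fewer common zeros, hence a smaller probability, at the cost of at most a constant factor once $k\to\infty$). Collecting the pieces and using $N=\binom{k+1}{2}$ yields exactly $\prob(\dim\sqc{\code{A}}\le\binom{k+1}{2}-\ell)=O\!\big(q^{-\ell}\cdot q^{-(n-\binom{k+1}{2})}\big)$ as $k\to+\infty$.

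**Main obstacle.**

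The delicate point is the uniform control of the number of common zeros of several quadratic forms over a finite field — both the single-form count (sharp enough that $(1+\text{error})^n$ stays bounded) and especially the count for $\ell$ independent forms, including the pathological strata in small characteristic and low rank. Everything else is bookkeeping: choosing the right representation of "random code," stratifying by rank, and assembling the union bound so that the Grassmannian count $q^{\ell(N-\ell)}$ is beaten by the $q^{-\ell n}$ from the dependency probabilities.
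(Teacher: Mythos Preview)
The paper does not prove this theorem itself: its ``proof'' is a one-line citation to an external reference (Cascudo--Cramer--Mirandola--Z\'emor). So there is no in-paper argument to compare against, and your sketch has to be judged on its own merits. Your reduction is correct and natural: a defect of $\ell$ in $\dim\sqc{\code{A}}$ means the $n$ columns $\word{x}^{(t)}$ of a random generator matrix are common zeros of $\ell$ linearly independent quadratic forms on $\fq^k$, and one then wants to control the probability of this by a union bound.

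However, two of your key estimates are wrong, and the second is precisely the heart of the matter. First, a rank-$r$ quadratic form does \emph{not} vanish with probability $\le q^{-r}$: a nonzero quadratic form of any rank $r\ge 1$ vanishes at roughly a $q^{-1}$-fraction of $\fq^k$ (exactly $q^{-1}$ when $r$ is odd, and $q^{-1}\bigl(1\pm(q-1)q^{-r/2}\bigr)$ when $r$ is even). Rank stratification therefore controls only the error term around $q^{-1}$, not the main term, and the rank-$2$ hyperbolic stratum still carries a factor of order $(2-q^{-1})^n$ that your accounting does not absorb. Second, and more seriously, linear independence of $\ell$ quadrics does \emph{not} force their common zero locus to have $\le q^{k-\ell}(1+o(1))$ points: the forms $x_1^2,\ x_1x_2,\ \dots,\ x_1x_\ell$ are independent yet vanish simultaneously on the whole hyperplane $\{x_1=0\}$, of size $q^{k-1}$, for every $\ell\le k$. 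For the many codimension-$\ell$ subspaces of this type the per-subspace probability is only $\approx q^{-n}$, not $q^{-\ell n}$, and multiplying by the $\approx q^{\ell(N-\ell)}$ subspaces makes the union bound explode. You flag this as ``the main technical work'' but offer no mechanism to handle it; this is exactly the obstacle the cited reference has to overcome, and doing so requires either a genuinely fine stratification of $\ell$-dimensional spaces of quadratic forms by the actual size of their common zero locus (with matching enumeration in each stratum), or a different probabilistic scheme that bypasses the subspace union bound altogether.
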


\begin{proof}
  See \cite{CCMZdraft}.
\end{proof}

\begin{remark}
A slightly weaker result was already obtained in the papers
\cite{FGOPT11a,FGOPT13}  (see also \cite{MP12a}).  
\end{remark}

For this reason, 
$\GRS{k}{\xv}{\yv}$ can be distinguished from 
 a random linear code of the same dimension by computing the dimension of the associated square 
code. In \cite{FGOPT11a,FGOPT13}, this phenomenon was already observed  for $q$-ary alternant codes (in particular Goppa codes) at very high rates whose duals are
distinguishable from random codes by the very same manner.
Subsequently, the very same phenomenon
lead to
attacks on GRS based cryptosystems \cite{CGGOT13,CGGOT:dcc14}, to
 a polynomial time attack on Wild Goppa codes over quadratic extensions \cite{COT:EC14} and to a polynomial time attack on algebraic geometry codes \cite{CMP_isit14}.

\medskip

Historically, the star product of codes  has been used for the first time by Wieschebrink to 
cryptanalyze a McEliece-like scheme \cite{BL05} based on subcodes of Reed-Solomon codes
 \cite{Wie10}.  The use of the star product here is nevertheless different from the way it is used in \cite{Wie10}. In Wieschebrink's paper,
 the star product is used to identify, given a certain low codimensional subcode  $\code{C}$ of a GRS code $\GRS{k}{\xv}{\yv}$, a possible pair $(\xv,\yv)$. This is achieved by computing $\sqc{\code{C}}$ which turns out to be $\GRS{k}{\xv}{\yv}^2 = \GRS{2k-1}{\xv}{\yv \cwp \yv}$
with a high probability. The Sidelnikov and Shestakov algorithm is then 
 used on  $\sqc{\code{C}}$ to recover a possible $(\xv,\yv\cwp\yv)$ pair to describe $\sqc{\code{C}}$ as a GRS 
 code, and hence, a pair $(\xv,\yv)$  is deduced for which $\code{C}
 \subset \GRS{k}{\xv}{\yv}$.

\section{Description of the Scheme}
   \label{sec:schemeit}

The BBCRS  public-key encryption scheme given in \cite{BBCRS11b} can be summarized as follows:

\begin{description}
	\item \textbf{Secret key.} 
          \begin{itemize}
          \item $\Gms$ is a generator matrix of a GRS code of length $n$ and dimension $k$ over $\fq$.
          \item $ \Qm  \eqdef \Tm + \Rm $ where
            $\Tm$ is an $n \times n$ non-singular sparse matrix with elements in $\fq$ and average row weight $m \ll n$. Note that $m$ is not necessarily an integer. For example $m=1.4$ means that $40\%$ of the rows of $\Tm$ have weight equal to 2 and the other $60\%$ have weight equal to 1.
          \item $\Rm$ is a rank-$z$ matrix over $\fq$ such that $\Qm$
            is invertible. In other words there exist 
             $\alphav \eqdef (\alpha_1, \dots{}, \alpha_n) $ and $\betav \eqdef (\beta_1, \dots{}, \beta_n)$ 
 such that  $\Rm \eqdef \alphav^T \betav$ and $ \alpha_i$ and $\beta_i$ are $z\times 1$ full rank matrices defined over $\fq$ for all $ i \in \{1, \dots, n \}$ and $z\leq n$.
          \item $\Sm$ is a $k \times k$ random invertible  matrix over $\fq$.
          \end{itemize}

        \item \textbf{Public key.} 
        \begin{equation}\label{eq:Gpub}
        \displaystyle \Gmp \eqdef \Sm^{-1} \Gms \Qm^{-1}.
        \end{equation}
          
	\item \textbf{Encryption.} The ciphertext $\cv \in \fq^n$ of a plaintext
          $\mv \in \fq^k$ is obtained by drawing at random $\ev$
          in $\fq^n$ of weight less than or equal  to $\frac{n-k}{2m}$ (recall that $m$ denotes the density of the matrix $\Tm$) and computing
          $\displaystyle \cv \eqdef \mv \Gmp  +  \ev$. 
          
	\item \textbf{Decryption.} It consists in performing the three
          following steps:
	\begin{enumerate}
	\item Guessing the value of  $\ev  \Rm$.
	\item Calculating $\cv' \eqdef \cv \Qm - \ev \Rm= \mv \Sm^{-1}\Gms + \ev  \Qm - \ev \Rm =  \mv \Sm^{-1}\Gms + \ev  \Tm $
	and using the decoding algorithm of the GRS code to recover
	$\mv \Sm^{-1}$ from the knowledge of $\cv'$.
	\item Multiplying the result of the decoding by $\Sm$ to recover $\mv$.
	\end{enumerate}
\end{description}

\begin{remark}
  In \cite{BBCRS11b}, the authors suggest to take $m = 1+ \frac{n-k-3}{n} \approx 2-R$ for the density of $\Tm$.
\end{remark}

\paragraph{\bf Further details on the construction of the matrix $\Tm$.}

We deal with the case $m \leq 2$. 
According to \cite{BBCRS11b} the matrix $\Tm$ is constructed\footnote{Actually, the authors propose three constructions for $\Tm$
and express a clear preference for the one described in the present
article.}  as follows.
\begin{enumerate}
\item Choose a permutation matrix $\Pm$. Replace each $1$ by a random 
element of $\fq^{\times}$.
\item Set $t\eqdef \lfloor \frac{n-k}{2} \rfloor$, $\delta_t \eqdef t - \lfloor \frac{t}{m} \rfloor$ and $\ell \eqdef \lfloor (m-1)n \rfloor$. Choose a random set $\mathcal{C}$ of $\delta_t$
columns and a random set $\Jind_2$ of $\ell$ rows of $\Pm$.
\item For all $i \in \Jind_2$, we denote by $\pi(i)$ the integer
such that $\Pm_{i, \pi(i)} \neq 0$. For each $i \in \Jind_2$, choose a random
element $j \in \mathcal{C}\setminus \pi(i)$ and add a random element of $\fq^{\times}$ at position $(i,j)$.
\end{enumerate}

We also tested another construction allowing to have row and column
weight upper bounded by $2$.
The sparse matrix $\Tm$ is constructed as $\Tm = \Tm_1 + \Tm_2$
where:
\begin{itemize}
\item $\Tm_1$ is of the form $\Tm_1 = \Dm_1 \Pm_1$, where $\Dm_1$ is diagonal
invertible and $\Pm_1$ is a permutation matrix;
\item $\Tm_2 = \Dm_2 \Pm_2$, where $\Dm_2$ is diagonal with $(m-1)n$ nonzero
diagonal coefficients and $\Pm_2$ is a permutation matrix; 
\item The matrices do not overlap, that is, there is no pair $(i,j)$
with $1\leq i, j \leq n$ such that both $(\Tm_1)_{ij}$ and $(\Tm_2)_{ij}$
are nonzero.
\end{itemize}

Our attack works for both choices of the matrix $\Tm$. The
experimental results in Sec. \ref{sec:exp_res} rely on the
first construction for $\Tm$.
 
\subsection{Previous attacks and discussion on the parameters}
\label{ss:total_recall}

The BBCRS scheme has been subject to an attack \cite{CGGOT:dcc14} in the case
$m=1$, \textit{i.e.} the matrix $\Tm$ is a permutation matrix and $z=1$, \textit{i.e.} the matrix
$\Rm$ has rank $1$. The attack presented here holds for $m < 1+R + O(\frac{1}{\sqrt{n}})$ and $z=1$. 
The relevance of choosing
higher $m$ or $z$ is discussed in Section \ref{sec:conclusion}.

The attack of the present article uses in its last step  the attack \cite{CGGOT:dcc14} on the original system \cite{BBCRS11a}.

\subsection{Notation}
It will be convenient to bring the following notation.
\begin{itemize}
\item $\Cpub$ is  the code with generator matrix $\Gmp$;

\item $\Csec$ is the GRS code with generator matrix $\Gms$, we assume that it is specified by its dual (which is itself a GRS code) as
$\Csec^\perp = \GRS{n-k}{\xv}{\yv}$;

\item $\Jind_1$ is the set of positions which correspond to rows of $\Tm$ of Hamming weight $1$. The elements of
$\Jind_1$ are called the {\em positions of degree $1$}. For any row $i \in \Jind_1$ of $\Tm$, we define $j(i)$ as the unique column of $\Tm$ for which $T_{ij(i)} \neq 0$;

\item $\Jind_2$ is the set of positions which correspond to rows of $\Tm$ of Hamming weight $2$.
The positions in $\Jind_2$ are called the {\em positions of degree $2$}.
When $i$ belongs to $\Jind_2$, let $j_1$ and $j_2$ be
the columns of $\Tm$ for which we have $T_{ij_1} \neq 0$ and $T_{ij_2} \neq 0$.
We define similarly $j(i)$ as the set $\{j_1,j_2\}$ in this case. 
\end{itemize}

\subsection{Structure of the public code}

The following result explains how $\Cpub$ and $\Csec$ and their duals are related.
\begin{lemma}
\label{lem:basics}
\begin{eqnarray}
\Cpub & = &\Csec (\Tm+\Rm)^{-1} \label{eq:Cpub}\\
\Cpub^\perp & = & \Csec^\perp (\Tm + \Rm)^T \label{eq:Cpubperp}.
\end{eqnarray}

\end{lemma}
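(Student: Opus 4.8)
The plan is to unwind the definition of $\Gmp$ and use two elementary facts: that multiplying a generator matrix on the left by an invertible matrix does not change the code, and that the dual of a code transformed on the right by an invertible matrix is the dual transformed by the inverse-transpose. Set $\Qm \eqdef \Tm + \Rm$, which is invertible by hypothesis.

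First I would prove \eqref{eq:Cpub}. By definition $\Cpub$ is the row space of $\Gmp = \Sm^{-1}\Gms\Qm^{-1}$. Since $\Sm^{-1}$ is a $k\times k$ invertible matrix, the row space of $\Sm^{-1}\Gms\Qm^{-1}$ equals the row space of $\Gms\Qm^{-1}$; and the row space of $\Gms \Qm^{-1}$ is exactly $\{\cv \Qm^{-1} : \cv \in \Csec\} = \Csec\Qm^{-1}$, because the rows of $\Gms$ span $\Csec$ and right multiplication by $\Qm^{-1}$ is linear. This gives $\Cpub = \Csec(\Tm+\Rm)^{-1}$.

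Next I would establish the general duality identity: for any code $\code{C}$ of length $n$ and any invertible $n\times n$ matrix $\Mm$, one has $(\code{C}\Mm)^\perp = \code{C}^\perp (\Mm^T)^{-1}$. This follows by a direct computation with the inner product: $\xv \in (\code{C}\Mm)^\perp$ iff $\scp{\xv}{\cv\Mm} = \xv\, \Mm^T \cv^T = 0$ for all $\cv \in \code{C}$, i.e.\ iff $\xv \Mm^T \in \code{C}^\perp$, i.e.\ iff $\xv \in \code{C}^\perp (\Mm^T)^{-1}$ (using that $\Mm^T$ is invertible, so the map $\xv \mapsto \xv\Mm^T$ is a bijection on $\fq^n$). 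Applying this with $\code{C} = \Csec$ and $\Mm = \Qm^{-1}$, and using \eqref{eq:Cpub}, yields
$$
\Cpub^\perp = \left(\Csec\Qm^{-1}\right)^\perp = \Csec^\perp\left(\left(\Qm^{-1}\right)^T\right)^{-1} = \Csec^\perp\, \Qm^T = \Csec^\perp(\Tm+\Rm)^T,
$$
which is \eqref{eq:Cpubperp}. There is no real obstacle here; the only point requiring a line of care is the inverse-transpose bookkeeping in the duality identity, everything else being immediate from linearity and invertibility.
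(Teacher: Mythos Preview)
Your proof is correct and follows essentially the same route as the paper's. The only cosmetic difference is that the paper shows orthogonality of $\Csec^\perp(\Tm+\Rm)^T$ with $\Cpub$ via the same inner-product computation and then concludes equality by a dimension count, whereas you package the argument as the general identity $(\code{C}\Mm)^\perp = \code{C}^\perp(\Mm^T)^{-1}$ and obtain both inclusions at once from the biconditional; the underlying computation is identical.
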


\begin{proof}
The first equality follows immediately from \eqref{eq:Gpub}, whereas 
the second one was is observed in \cite[p.6, Equation (8)]{BBCRS11b} 
where a parity-check matrix for the public code $\Cpub$ is expressed in
terms of a parity-check matrix of the secret code. This can be proved as
 follows.
For all $c\in \Csec$, $c'\in \Csec^{\bot}$, 
$$
\scp{(c(\Tm +\Rm)^{-1})}{(c'(\Tm+\Rm)^T)} = \scp{(c(\Tm +\Rm)^{-1}(\Tm+\Rm))}{c'}
= \scp{c}{c'}=0.
$$
Moreover, since $\Qm = \Tm + \Rm$ is invertible, 
we get $\dim \Csec^{\perp}(\Tm+\Rm)^T + \dim \Csec(\Tm +\Rm)^{-1} = n$,
hence the codes are dual to each other.
\end{proof}

\section{The fundamental tool: shortening and puncturing the dual of the public code}

The lemmas stated in the present subsection are proved in Appendix \ref{sec:A}.

Puncturing and shortening will play a fundamental role in the attack. Recall that for a given code $\CC \subset \fq^n$ and a subset $\Ind$ of code positions
 the \emph{punctured} code $\pu{\Ind}{\CC}$ and \emph{shortened} code  $\sh{\Ind}{\CC}$
are defined as:
$$
\begin{aligned}
\pu{\Ind}{\CC} &\eqdef \left \{{(c_i)}_{i \notin \Ind}~|~ \cv \in \CC \right \}; \\
\sh{\Ind}{\CC} &\eqdef \left \{ {(c_i)}_{i \notin \Ind}~|~\exists \cv={(c_i)}_i \in \CC \text{ such that } 
\forall i \in \Ind,\; c_i = 0 \right \}.  
\end{aligned}
$$

Given a subset $\Ind$ of the set of coordinates of a vector $\uv$, we denote by
$\pu{\Ind}{\uv}$ the vector $\uv$ \emph{punctured} at $\Ind$, that is to say, \emph{indexes that are in $\Ind$ are removed}.

First let us recall the influence of these operations on GRS codes.
  \begin{lemma}\label{lem:punc_short_GRS}
    Let $\xv, \yv$ be two $n$--tuples of element sof $\fq$
    such that $\xv$ has pairwise distinct entries and $\yv$
    has only nonzero entries. Let $k<n$ and $\Ind \subseteq \{1, \ldots, n\}$.
    Then
    \begin{align}
      \label{eq:punctGRS} 
      \pu{\Ind}{\GRS{k}{\xv}{\yv}} &= \GRS{k}{\pu{\Ind}{\xv}}{\pu{\Ind}{\yv}}\\
      \label{eq:shortGRS}
      \sh{\Ind}{\GRS{k}{\xv}{\yv}} &= \GRS{k-|\Ind|}{\pu{\Ind}{\xv}}{\yv_{\Ind}},
    \end{align}
    for some $\yv_{\Ind} \in \fq^{n-|\Ind|}$ depends only on $\yv$ and $\Ind$.
  \end{lemma}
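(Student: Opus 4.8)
The plan is to prove both identities directly from the polynomial description of $\GRS{k}{\xv}{\yv}$, exploiting the fact that puncturing and shortening translate into transparent operations on the underlying evaluation polynomials; no sophistication is needed.

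For \eqref{eq:punctGRS} I would simply unwind the definitions. Every codeword of $\GRS{k}{\xv}{\yv}$ has the form $\cv = (y_1 p(x_1), \dots, y_n p(x_n))$ with $\deg p < k$, and puncturing at $\Ind$ merely deletes the coordinates indexed by $\Ind$, so $\pu{\Ind}{\cv} = (y_i p(x_i))_{i \notin \Ind}$. As $p$ ranges over all polynomials of degree $< k$, the set of such vectors is by definition $\GRS{k}{\pu{\Ind}{\xv}}{\pu{\Ind}{\yv}}$; this is a legitimate GRS code because $\pu{\Ind}{\xv}$ still has pairwise distinct entries and $\pu{\Ind}{\yv}$ still has only nonzero entries. (If $k \geq n - |\Ind|$ the evaluation map is onto by Lagrange interpolation, and both sides equal $\fq^{\,n-|\Ind|}$.)

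For \eqref{eq:shortGRS} I would first characterize which codewords survive the shortening: $\cv = (y_i p(x_i))_i$ vanishes on every position of $\Ind$ iff $p(x_i) = 0$ for all $i \in \Ind$ (using $y_i \neq 0$), and, since the $x_i$ are pairwise distinct, this holds iff the degree-$|\Ind|$ polynomial $\pi_{\Ind}(X) \eqdef \prod_{i \in \Ind}(X - x_i)$ divides $p$. Writing $p = \pi_{\Ind}\, g$ with $g$ ranging freely over polynomials of degree $< k - |\Ind|$, one has $y_j p(x_j) = \big(y_j \pi_{\Ind}(x_j)\big) g(x_j)$ for $j \notin \Ind$. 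Then setting $\yv_{\Ind} \eqdef \big(y_j \pi_{\Ind}(x_j)\big)_{j \notin \Ind}$ — a vector with only nonzero entries, being products of the nonzero $y_j$ with the nonzero differences $x_j - x_i$ ($i \in \Ind$), and depending only on $\yv$, $\xv$ and $\Ind$ — the shortened codeword $\pu{\Ind}{\cv}$ ranges exactly over $\GRS{k-|\Ind|}{\pu{\Ind}{\xv}}{\yv_{\Ind}}$ as $g$ varies, which is the claim. (If $|\Ind| \geq k$ the only admissible $p$ is $0$, and both sides reduce to the zero code.)

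I do not expect a genuine obstacle. The only points deserving a little care are the degree bookkeeping in the factorization $p = \pi_{\Ind} g$ — the degree drops by exactly $|\Ind|$, consistently with a GRS (hence MDS) code losing one dimension per shortened coordinate — and the two degenerate regimes $k \geq n - |\Ind|$ and $|\Ind| \geq k$. Everything else rests on the elementary fact that, for pairwise distinct $x_i$, a polynomial vanishes at all $x_i$ ($i\in\Ind$) if and only if it is divisible by $\prod_{i \in \Ind}(X - x_i)$.
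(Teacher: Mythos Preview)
Your proof is correct and follows essentially the same route as the paper: puncturing is the trivial removal of coordinates, and for shortening you factor $p = \pi_{\Ind}\, g$ with $\pi_{\Ind}(X) = \prod_{i\in\Ind}(X-x_i)$ to identify the new multiplier vector $\yv_{\Ind} = (y_j\pi_{\Ind}(x_j))_{j\notin\Ind}$. Your treatment is in fact slightly more careful than the paper's, since you check the degenerate regimes and correctly note that $\yv_{\Ind}$ also depends on $\xv$ (through the $x_i$, $i\in\Ind$), a dependence the statement's phrasing glosses over.
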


Next, with these notions at hand,  it follows that the dual of the public code punctured in $\Jind_2$ is very close to a GRS code. 
We will also need to understand the structure of versions of this code which are shortened in positions belonging to 
$\Jind_1$ and then punctured in $\Jind_2$. It turns out that these codes too are close to 
GRS codes.
First of all, puncturing $\Cpub^\perp$ in the positions belonging to $\Jind_2$ gives ``almost'' a GRS code, as shown by:
\begin{lemma}
\label{lem:puncture_1}
Let $\uv=(u_i)_{i \in \Jind_1}$ and $\vv=(v_i)_{i \in \Jind_1}$ be vectors in $\fq^{n-|\Jind_2|}$ defined by
\begin{eqnarray*}
u_i & = & x_{j(i)}\\
v_i & = & T_{ij(i)} y_{j(i)}.
\end{eqnarray*}
Let $\code{D} \eqdef \Csec^\perp \Tm ^T$, then 
\begin{equation}
\pu{\Jind_2}{\code{D}} \subseteq \GRS{n-k}{\uv}{\vv}.
\end{equation}
\end{lemma}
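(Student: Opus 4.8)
The plan is simply to unwind the definitions. By hypothesis $\Csec^\perp = \GRS{n-k}{\xv}{\yv}$, so every codeword $\cv \in \Csec^\perp$ has the form $\cv = (y_1 p(x_1), \dots, y_n p(x_n))$ for some $p \in \fq[X]$ with $\deg p < n-k$. A generic element of $\code{D} = \Csec^\perp \Tm^T$ is $\dv = \cv \Tm^T$, whose $i$-th coordinate is $d_i = \sum_{j} T_{ij}\, c_j$. The first step is to look only at the coordinates that survive puncturing at $\Jind_2$, namely those indexed by $i \in \Jind_1$. For such an $i$, row $i$ of $\Tm$ has a single nonzero entry, located in column $j(i)$, so the sum collapses to
$$
d_i \;=\; T_{i,j(i)}\, c_{j(i)} \;=\; T_{i,j(i)}\, y_{j(i)}\, p(x_{j(i)}) \;=\; v_i\, p(u_i),
$$
with $u_i = x_{j(i)}$ and $v_i = T_{i,j(i)}\, y_{j(i)}$ exactly as defined in the statement. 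Hence $\pu{\Jind_2}{\dv} = (v_i\, p(u_i))_{i \in \Jind_1}$, which, since $\deg p < n-k$, is a codeword of $\GRS{n-k}{\uv}{\vv}$. As $\cv$ (equivalently $p$) ranges over $\Csec^\perp$, this yields the claimed inclusion $\pu{\Jind_2}{\code{D}} \subseteq \GRS{n-k}{\uv}{\vv}$.

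The only point requiring more than bookkeeping, and the step I would flag as the main (minor) obstacle, is checking that the right-hand side is a bona fide GRS code, i.e. that the evaluation points $u_i = x_{j(i)}$, $i \in \Jind_1$, are pairwise distinct. The multipliers $v_i$ are automatically nonzero, since $T_{i,j(i)} \neq 0$ and $\yv$ has nonzero entries, so only distinctness is at stake. I would argue that $i \mapsto j(i)$ is injective on $\Jind_1$: if $j(i) = j(i')$ for distinct $i, i' \in \Jind_1$, then rows $i$ and $i'$ of $\Tm$ are both nonzero scalar multiples of the $j(i)$-th vector of the standard basis, hence linearly dependent, contradicting the non-singularity of $\Tm$. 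Since $\xv$ has pairwise distinct entries, the injectivity of $j$ on $\Jind_1$ forces the $u_i$ to be pairwise distinct, which finishes the proof. (The same computation read together with Lemma \ref{lem:basics} also explains why $\pu{\Jind_2}{\Cpub^\perp}$ differs from $\pu{\Jind_2}{\code{D}}$ by at most the rank of $\Rm$, but that observation is not needed here.)
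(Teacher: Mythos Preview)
Your argument is correct and follows the same line as the paper's proof: both simply expand the coordinates of $\cv\Tm^T$ and use that rows of weight one collapse the sum to $T_{ij(i)}\,y_{j(i)}\,P(x_{j(i)})$. If anything, your write-up is a bit more careful than the paper's, which does not explicitly check that the $u_i = x_{j(i)}$ are pairwise distinct; your observation that two identical $j(i)$'s would force two proportional rows in the non-singular matrix $\Tm$ fills that small gap.
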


\begin{lemma}\label{lem:puncture_2}
Let $\lambda$ and $\mu$ be vectors of $\fq^n$ such that
$\Rm^T = \lambda^T \mu$ and let $\Csec^\perp(\lambda) \eqdef  \Csec^\perp \cap {<\lambda>}^\perp$, $\Cpub^\perp(\lambda) \eqdef \Csec^\perp(\lambda)(\Tm^T + \Rm^T)$.
Then, 
\begin{equation}
\label{eq:punctured}
\pu{\Jind_2}{\Cpub^\perp(\lambda)} \subseteq \GRS{n-k}{\uv}{\vv},
\end{equation}
Moreover if $\Jind_1$ contains an information set\footnote{
In coding theory, an {\em information set} of a code $\CC$
of dimension $k$
is a set of $k$ positions $\Ind$ such that the knowledge of 
a codeword $\cv\in \CC$ on the positions in $\Ind$ determines
entirely the codeword. Equivalently, if $\Gm$ denotes a 
$k\times n$ generator matrix of the code, then the $k\times k$
submatrix of $\Gm$ given by extracting the columns indexed by $\Ind$
is invertible.}
of $\Csec^\perp \Tm^T$ and $\Tm^T$ is invertible, then 
there exist $\av$ and $\bv$ in $\fq^{n-|\Jind_2|}$ such that for any $\cv$ in $\pu{\Jind_2}{\Cpub^\perp}$, there exists a vector $\pv$ in $\GRS{n-k}{\uv}{\vv}$ for which
\begin{equation}
\label{eq:further}
\cv = \pv +(\scp{\pv}{\bv})\av.
\end{equation}
In particular, $\pu{\Jind_2}{\Cpub^{\perp}} \subseteq \GRS{n-k}{\uv}{\vv} +
< \av >$.
\end{lemma}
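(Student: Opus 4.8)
The plan is to treat the two assertions of the lemma in turn, with essentially all of the substance in the second. Throughout I will use that $\Cpub^\perp=\Csec^\perp(\Tm^T+\Rm^T)$ (Lemma~\ref{lem:basics}, since $(\Tm+\Rm)^T=\Tm^T+\Rm^T$) and the rank-one decomposition $\Rm^T=\lambda^T\mu$, which gives $c\Rm^T=(\scp{c}{\lambda})\,\mu$ for any row vector $c$. The inclusion \eqref{eq:punctured} is then immediate: for $c\in\Csec^\perp(\lambda)$ we have $\scp{c}{\lambda}=0$, hence $c(\Tm^T+\Rm^T)=c\Tm^T\in\Csec^\perp\Tm^T=\code{D}$, so that $\Cpub^\perp(\lambda)\subseteq\code{D}$; puncturing at $\Jind_2$ and applying Lemma~\ref{lem:puncture_1} gives $\pu{\Jind_2}{\Cpub^\perp(\lambda)}\subseteq\pu{\Jind_2}{\code{D}}\subseteq\GRS{n-k}{\uv}{\vv}$.

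For the refined identity \eqref{eq:further}, I would introduce the linear map $\psi\colon\Csec^\perp\to\fq^{n-|\Jind_2|}$ defined by $\psi(c)=\pu{\Jind_2}{c\Tm^T}$, whose image lies in $\GRS{n-k}{\uv}{\vv}$ by Lemma~\ref{lem:puncture_1}. The first step is to show that $\psi$ is injective under the standing hypotheses: if $\psi(c)=0$ then the codeword $c\Tm^T\in\code{D}$ vanishes on every coordinate outside $\Jind_2$, in particular on an information set of $\code{D}$ contained in $\Jind_1$, hence $c\Tm^T=0$, and invertibility of $\Tm^T$ forces $c=0$. Next, given any $\cv\in\pu{\Jind_2}{\Cpub^\perp}$, write $\cv=\pu{\Jind_2}{c(\Tm^T+\Rm^T)}$ with $c\in\Csec^\perp$ and split off the rank-one part to obtain $\cv=\psi(c)+(\scp{c}{\lambda})\,\pu{\Jind_2}{\mu}$. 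Set $\pv\eqdef\psi(c)\in\GRS{n-k}{\uv}{\vv}$ and $\av\eqdef\pu{\Jind_2}{\mu}$. Since $\psi$ is injective, $c$ — and therefore the scalar $\scp{c}{\lambda}$ — is a well-defined linear function of $\pv$ on the image of $\psi$; extending this linear form to all of $\fq^{n-|\Jind_2|}$ and representing it in the form $\pv\mapsto\scp{\pv}{\bv}$ for a suitable fixed $\bv$, we get $\scp{c}{\lambda}=\scp{\pv}{\bv}$, whence $\cv=\pv+(\scp{\pv}{\bv})\,\av$. This is \eqref{eq:further}, and the final inclusion $\pu{\Jind_2}{\Cpub^\perp}\subseteq\GRS{n-k}{\uv}{\vv}+\langle\av\rangle$ follows immediately because every $\cv$ has been written as an element of $\GRS{n-k}{\uv}{\vv}$ plus a multiple of the fixed vector $\av$.

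The only point carrying real content is the injectivity of $\psi$: this is precisely where the hypothesis ``$\Jind_1$ contains an information set of $\Csec^\perp\Tm^T$'' (together with $\Tm^T$ invertible) is consumed, and it is what upgrades the vacuous observation ``$\scp{c}{\lambda}$ depends on $c$'' into the usable statement ``$\scp{c}{\lambda}$ is a fixed linear functional evaluated at $\pv=\psi(c)$''. Everything else is bookkeeping with the factorization $\Rm^T=\lambda^T\mu$ and appeals to Lemmas~\ref{lem:basics} and~\ref{lem:puncture_1}; in particular the argument needs no special treatment of the degenerate cases $\mu=0$ or $\lambda$ orthogonal to $\Csec^\perp$, where the conclusion holds trivially with $\av=0$ or $\bv=0$.
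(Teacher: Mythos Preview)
Your proof is correct and follows essentially the same route as the paper. Both arguments reduce \eqref{eq:punctured} to $\Cpub^\perp(\lambda)\subseteq\code{D}$ via $\Csec^\perp(\lambda)\Rm^T=0$, and for \eqref{eq:further} both set $\av=\pu{\Jind_2}{\mu}$, $\pv=\pu{\Jind_2}{c\Tm^T}$, and use that the map $c\mapsto\pu{\Jind_2}{c\Tm^T}$ is injective on $\Csec^\perp$ (equivalently, the paper's composite $\Cpub^\perp\to\Csec^\perp\to\Csec^\perp\Tm^T\to\pu{\Jind_2}{\Csec^\perp\Tm^T}$ is an isomorphism) to turn $\scp{c}{\lambda}$ into a fixed linear form $\scp{\pv}{\bv}$; your explicit naming of $\psi$ and the extension-of-linear-functional step is just a cleaner packaging of the same idea.
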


 If we puncture with respect to $\Jind_2$ shortened versions of $\Cpub^\perp$ in positions belonging to $\Jind_1$, then we observe
 a similar phenomenon, namely 
 \begin{lemma}
 \label{lem:inclusion_punctured_shortened}
Let $\Ind_1$ be a subset of code positions which is a subset of $\Jind_1$. Let $s \eqdef |\Ind_1|$ and assume that 
$s \leq n-k$. Then there exist vectors $\av,\uv,\vv$ in  $\fq^{n-s-|\Jind_2|}$
such that: 
\begin{equation}
\label{eq:punctured_shortened}
\pu{\Jind_2}{ \sh{\Ind_1}{\Cpub^\perp}} \subseteq \code{E} + <\av>
\end{equation}
and $\code{E}$ is a subcode of $\GRS{n-k-s}{\uv}{\vv}$.
\end{lemma}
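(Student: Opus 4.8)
The plan is to imitate the computation behind Lemma~\ref{lem:puncture_2}, while carrying along the extra linear constraint created by the shortening. First I would fix an arbitrary codeword $\cv=\dv\Qm^T\in\Cpub^\perp$ with $\dv\in\Csec^\perp=\GRS{n-k}{\xv}{\yv}$, written as $\dv=(y_1p(x_1),\dots,y_np(x_n))$, $\deg p<n-k$. Since $z=1$ I can write $\Rm=\alphav^T\betav$ with $\alphav,\betav\in\fq^n$, so that, setting $\theta\eqdef\scp{\dv}{\betav}\in\fq$, one has $\cv=\dv\Tm^T+\theta\alphav$. Because for every position $i\in\Jind_1$ the $i$-th row of $\Tm$ is supported on the single column $j(i)$ (and in the regime $m\le2$ every row of $\Tm$ has weight $1$ or $2$, so $\Jind_1$ is exactly the complement of $\Jind_2$), this gives $c_i=v_ip(u_i)+\theta\alpha_i$ for all $i\in\Jind_1$, with $u_i\eqdef x_{j(i)}$ and $v_i\eqdef T_{ij(i)}y_{j(i)}\neq0$ the quantities of Lemma~\ref{lem:puncture_1}.

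Next I would use the shortening condition: an element of $\sh{\Ind_1}{\Cpub^\perp}$ arises from such a $\cv$ subject to $c_i=0$ for all $i\in\Ind_1$, i.e. $p(u_i)=-\theta\alpha_i/v_i$ on $\Ind_1$. The points $(u_i)_{i\in\Ind_1}$ are pairwise distinct, because $i\mapsto j(i)$ is injective on $\Jind_1$ — two weight-one rows of $\Tm$ with the same support would be proportional, contradicting the invertibility of $\Tm$ — and $\xv$ has distinct entries. I would then introduce the unique polynomial $r$ with $\deg r<s$ and $r(u_i)=-\alpha_i/v_i$ on $\Ind_1$; crucially $r$ depends only on $\Ind_1$, $\Tm$, $\xv$, $\yv$, $\alphav$, not on $\cv$. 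Then $p-\theta r$ vanishes on $\{u_i:i\in\Ind_1\}$, hence $p=\theta r+q\,\Pi$ with $\Pi\eqdef\prod_{i\in\Ind_1}(X-u_i)$ and $\deg q<n-k-s$, this last inequality being exactly where the hypothesis $s\le n-k$ enters.

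Assembling, for $i\in\Jind_1\setminus\Ind_1$ one gets
$$
c_i=v_i\bigl(\theta r(u_i)+q(u_i)\Pi(u_i)\bigr)+\theta\alpha_i=\theta\bigl(v_ir(u_i)+\alpha_i\bigr)+\bigl(v_i\Pi(u_i)\bigr)q(u_i).
$$
Hence I would set $\av\eqdef(v_ir(u_i)+\alpha_i)_{i\in\Jind_1\setminus\Ind_1}$, which does not depend on $\cv$, together with $\uv\eqdef(u_i)_{i\in\Jind_1\setminus\Ind_1}$ (pairwise distinct entries) and $\vv\eqdef(v_i\Pi(u_i))_{i\in\Jind_1\setminus\Ind_1}$ (nonzero entries, again by injectivity of $i\mapsto j(i)$ on $\Jind_1$). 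As $\deg q<n-k-s$, the vector $\bigl(v_i\Pi(u_i)q(u_i)\bigr)_i$ lies in $\GRS{n-k-s}{\uv}{\vv}$, whence $\pu{\Jind_2}{\sh{\Ind_1}{\Cpub^\perp}}\subseteq\GRS{n-k-s}{\uv}{\vv}+<\av>$; all three vectors have length $|\Jind_1|-s=n-|\Jind_2|-s$. To reach the stated form I would then observe that $U\eqdef\pu{\Jind_2}{\sh{\Ind_1}{\Cpub^\perp}}$ is a linear code contained in a space of dimension $\dim\GRS{n-k-s}{\uv}{\vv}+1$, so that $\code{E}\eqdef U\cap\GRS{n-k-s}{\uv}{\vv}$ is a subcode of $\GRS{n-k-s}{\uv}{\vv}$ with $U\subseteq\code{E}+<\av'>$ for any $\av'\in U\setminus\code{E}$ (or $U=\code{E}$ if no such $\av'$ exists).

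The one genuine subtlety is that $\theta=\scp{\dv}{\betav}$ is \emph{not} a free parameter: it is determined by $p$, so it cannot simply be rescaled away. The point, and the reason the argument works, is that this does not matter — the interpolation-and-division step $p=\theta r+q\,\Pi$ is valid whatever value of $\theta$ a given codeword happens to produce, and the entire $\theta$-dependence is pushed into the single fixed direction $\av$. The final passage from ``$U$ lies in a GRS code plus a line'' to ``$U\subseteq\code{E}+<\av'>$ with $\code{E}$ a genuine subcode'' is elementary linear algebra. One could instead try to deduce the lemma directly from Lemma~\ref{lem:puncture_2}, using that shortening at $\Ind_1$ and puncturing at $\Jind_2$ commute (the two sets being disjoint), then applying \eqref{eq:shortGRS} and the MDS property of $\GRS{n-k}{\uv}{\vv}$ to prescribe the $s\le n-k$ shortened coordinates; that route, however, would require the extra information-set hypothesis of Lemma~\ref{lem:puncture_2}, which is not assumed here, so I would favour the direct computation above.
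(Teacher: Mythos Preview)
Your proof is correct and takes a genuinely different route from the paper's. The paper works structurally: it isolates the hyperplane $\Cpub^\perp(\lambda)=\Csec^\perp(\lambda)\Tm^T$ of $\Cpub^\perp$ on which the rank-one perturbation vanishes, writes $\Cpub^\perp=\Cpub^\perp(\lambda)+\langle\av_0\rangle$, carries this decomposition through shortening and puncturing, and then computes $\pu{\Jind_2}{\sh{\Ind_1}{\code{D}}}$ explicitly as a GRS code. You instead track a generic codeword, package the rank-one contribution into the scalar $\theta$, and dispatch the $s$ shortening constraints by Lagrange interpolation followed by division by $\Pi$. The two arguments are dual --- your $\theta=0$ locus is precisely the paper's $\Cpub^\perp(\lambda)$, and your factor $\Pi$ is the same one appearing in the paper's description of $\sh{\Ind_1}{\code{D}}$. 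Your approach buys explicitness (you write down $\uv,\vv,\av$ concretely) and self-containment (no reference to $\Cpub^\perp(\lambda)$ or Lemma~\ref{lem:puncture_2}); you also justify the injectivity of $i\mapsto j(i)$ on $\Jind_1$, which the paper uses silently. The paper's approach buys a specific $\code{E}=\pu{\Jind_2}{\sh{\Ind_1}{\Cpub^\perp(\lambda)}}$ that is visibly of codimension~$1$ in the GRS code, and this extra bit is exactly what the proof of Proposition~\ref{prop:upper_bound} exploits; your $\code{E}=U\cap\GRS{n-k-s}{\uv}{\vv}$ satisfies the lemma as stated, but recovering that codimension bound downstream would require a short additional argument.
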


\section{Key-Recovery Attack}

\subsection{Outline}

Our key-recovery attack 
starts with a parity-check matrix $\Hmp$ of the (public) code $\Cpub$. The main goal is to recover matrices $\Tm$ and $\Rm$,
where $\Hmp {(\Tm^T + \Rm^T)}^{-1}$ is a parity check matrix
of a GRS code, $\Tm$ is a low density square matrix and $\Rm$
a rank $1$ matrix. Recall that in our terminology, rows of $\Tm$ belonging to $\Jind_1$ are positions of degree 1, and those in $\Jind_2$ are positions of degree 2. It implies, thanks to \eqref{eq:Cpubperp}, that some columns of $\Hmp$ belong to $\Jind_1$ and the others are in $\Jind_2$. 

Our attack is composed of three mains steps having the following objectives:
\begin{enumerate}
	\item Detecting columns of $\Hmp$ that belong to $\Jind_2$, and then deducing those of $\Jind_1$.
	\item Transforming columns of  $\Jind_2$ into degree 1 columns by linear combinations with columns of $\Jind_1$.
	
	\item At this stage, the public code has been  transformed into another code $\CC$ such that there exists a secret GRS code $\Csec'$ 
and a matrix $\Pi+ \Rm'$ where $\Pi$ is a permutation matrix and $\Rm'$
is rank-1 matrix such that:
	\begin{equation}
	\CC = \Csec' (\Pi + \Rm').
	\end{equation}
	The third step consists then in applying the attack developed in \cite{CGGOT:dcc14} which is purposely devised  to recover a pair $(\Pi, \Rm')$ from $\CC$ as outlined in Section~\ref{ss:total_recall}.
\end{enumerate} 

The purpose of the next sections is to describe more precisely the first two steps of the attack. Finally, the algorithms used in our implementation are postponed in Appendix~\ref{sec:algo}.
\subsection{A distinguisher of the public code}

The attack uses in a crucial way a distinguisher which discriminates the public code from a random code of the same dimension.
It  is based on square code considerations.
The point is the following: if we shorten the dual $\Cpub^\perp$ of the public code in a large enough set of positions $\Ind$, then 
the square code $\sqcp{\sh{\Ind}{\Cpub^\perp}}$ has dimension strictly smaller than that of
$\sqcp{\sh{\Ind}{\Crand^\perp}}$ where $\Crand$ is a random code of the same dimension as $\Cpub$.
The code $\sqcp{\sh{\Ind}{\Crand^\perp}}$
has dimension which is typically $\min\left\{n-|\Ind|,\binom{k_{\Ind}+1}{2} \right\}$ where $k_{\Ind}$ stands for the dimension of 
$\sh{\Ind}{\Crand^\perp}$. In general, $k_{\Ind}$ is equal to $n-k- |\Ind|$ since $\dim\Crand^\perp = \dim \Cpub^\perp =  n-k$  whereas we generally have: 
\begin{equation}
\label{eq:upper_bound}
 \dim \sqcp{\sh{\Ind}{\Cpub^\perp}}\leq 3(n-k) + |\Jind_2| - 3|\Ind| -1.
\end{equation}
In other words, when $3(n-k) + |\Jind_2| - 3|\Ind| -1 < \min\left\{n-|\Ind|,\binom{k_{\Ind}+1}{2} \right\}$ we expect to 
distinguish $\Cpub$ from a random code of the same dimension.
We write here ``generally'' because there are some exceptional cases where such an inequality does not hold. However in the
case when $\Ind \subset \Jind_1$, this inequality always holds.

\begin{proposition}
\label{prop:upper_bound}
Let $\Ind \subseteq \Jind_1$, then $
 \dim \sqcp{\sh{\Ind}{\Cpub^\perp}} \leq 3(n-k)-3|\Ind|  - 1  + |\Jind_2|.$
\end{proposition}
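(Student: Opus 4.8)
The plan is to discard the coordinates indexed by $\Jind_2$ --- where, by Lemma~\ref{lem:inclusion_punctured_shortened}, the shortened dual of the public code is a GRS code up to one extra direction --- to bound there the dimension of the square, and then to pay $|\Jind_2|$ for restoring the discarded coordinates. Concretely, I would set $W\eqdef\pu{\Jind_2}{\sh{\Ind}{\Cpub^\perp}}$, $s\eqdef|\Ind|$ and $r\eqdef n-k-s$. Since $\Ind\subseteq\Jind_1$ is disjoint from $\Jind_2$, puncturing commutes with the star product, so $\pu{\Jind_2}{\sqcp{\sh{\Ind}{\Cpub^\perp}}}=\sqc{W}$; moreover the kernel of the puncturing map restricted to $\sqcp{\sh{\Ind}{\Cpub^\perp}}$ is the subcode of codewords supported inside $\Jind_2$, which has dimension $\leq|\Jind_2|$. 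Hence $\dim\sqcp{\sh{\Ind}{\Cpub^\perp}}\leq\dim\sqc{W}+|\Jind_2|$, and it suffices to prove $\dim\sqc{W}\leq 3r-1$.

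For this, Lemma~\ref{lem:inclusion_punctured_shortened} provides vectors $\av,\uv,\vv$ and a code $\code{E}\subseteq\GRS{r}{\uv}{\vv}$ with $W\subseteq\code{E}+<\av>$, and Proposition~\ref{prop:square} gives $\code{E}^2\subseteq\GRS{r}{\uv}{\vv}^2=\GRS{2r-1}{\uv}{\vv\cwp\vv}$, so $\dim\code{E}^2\leq\min\{2r-1,\,n-s-|\Jind_2|\}$. I would then peel off the extra direction: one always has $\dim\sqc{W}\leq\dim\code{E}^2+\dim W$. Indeed, either $W\subseteq\code{E}$, and then $\sqc{W}\subseteq\code{E}^2$; or, setting $W_0\eqdef W\cap\code{E}$ (which has codimension $1$ in $W$, since $W\subseteq\code{E}+<\av>$ and $W\not\subseteq\code{E}$), we can write $W=W_0\oplus<\wv>$ and expanding products yields $\sqc{W}=\sqc{W_0}+\wv\cwp W_0+<\wv\cwp\wv>=\sqc{W_0}+\wv\cwp W$, so that $\dim\sqc{W}\leq\dim\sqc{W_0}+\dim(\wv\cwp W)\leq\dim\code{E}^2+\dim W$ (using $W_0\subseteq\code{E}$ and linearity of $\wv\cwp(\cdot)$).

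It then only remains to control $\dim W$, and the key point is that $\dim W\leq r$. This is where the hypothesis $\Ind\subseteq\Jind_1$ is essential: reading the columns of a parity-check matrix $\Hmp$ of $\Cpub$ off \eqref{eq:Cpubperp}, for $i\in\Jind_1$ the $i$-th column equals $T_{ij(i)}$ times one column of a generator matrix of the \emph{MDS} code $\Csec^\perp$ plus a scalar multiple of a single fixed vector; combining this with the fact that any $n-k$ columns of a generator matrix of an $[n,n-k]$ MDS code are independent, one gets that shortening $\Cpub^\perp$ at a set $\Ind\subseteq\Jind_1$ of size $\leq n-k$ drops its dimension by exactly $|\Ind|$, hence $\dim\sh{\Ind}{\Cpub^\perp}=n-k-s$ and a fortiori $\dim W\leq r$. (This non-degeneracy is exactly what the proof of Lemma~\ref{lem:inclusion_punctured_shortened} establishes.) Plugging in: if $2r-1\leq n-s-|\Jind_2|$ then $\dim\sqc{W}\leq(2r-1)+r=3r-1$; otherwise $n-s-|\Jind_2|\leq 2r-2$ and $\dim\sqc{W}\leq(n-s-|\Jind_2|)+r\leq 3r-2$. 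Either way $\dim\sqc{W}\leq 3r-1=3(n-k)-3|\Ind|-1$, and together with the first paragraph this proves the proposition.

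The step I expect to be the genuine obstacle is the bound $\dim W\leq n-k-|\Ind|$. Everything else is soft and would only yield the weaker estimate $3(n-k)-3|\Ind|$; the extra $-1$ rests on the fact that the public dual loses no dimension \emph{accidentally} when shortened at degree-$1$ positions --- equivalently, that $\Cpub$ has no codeword supported on such a set of at most $n-k$ positions --- which is precisely the structural content packed into Lemma~\ref{lem:inclusion_punctured_shortened}.
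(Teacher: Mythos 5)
Your proposal follows the paper's route almost step for step: you pay $|\Jind_2|$ to pass to $W=\pu{\Jind_2}{\sh{\Ind}{\Cpub^\perp}}$ (this is inequality \eqref{eq:sq_puncture} of Lemma \ref{lem:super_easy}), you invoke Lemma \ref{lem:inclusion_punctured_shortened} to place $W$ inside $\code{E}+\langle\av\rangle$ with $\code{E}\subseteq\GRS{n-k-|\Ind|}{\uv}{\vv}$, and your expansion $\sqc{W}\subseteq\sqc{W_0}+\wv\cwp W$ is a rearranged form of \eqref{eq:sum}. The only real divergence is where the crucial $-1$ comes from: the paper extracts it from the fact, established in the proof of Lemma \ref{lem:inclusion_punctured_shortened} (see \eqref{eq:codeE}), that $\code{E}$ has codimension $1$ in a GRS code of dimension $n-k-|\Ind|$, so that $\dim\code{E}\leq n-k-|\Ind|-1$; you extract it from the claim $\dim W\leq n-k-|\Ind|$, which you base on the assertion that shortening $\Cpub^\perp$ at any $\Ind\subseteq\Jind_1$ of size at most $n-k$ drops the dimension by exactly $|\Ind|$.

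That assertion is the weak point, and your justification of it does not hold as stated. The columns of a generator matrix of $\Cpub^\perp$ indexed by $\Jind_1$ have the form $T_{ij(i)}\hv_{j(i)}+\mu_i\gv$, where $\hv_{j(i)}$ is a column of a generator matrix of the MDS code $\Csec^\perp$ and $\gv$ is one fixed vector coming from the rank-one part $\Rm^T$. Any $n-k$ of the $\hv_{j(i)}$ are indeed independent, but adding multiples of a single fixed vector can destroy this independence; equivalently, $\Cpub$ may in degenerate cases contain a nonzero codeword supported on $\Ind$ (this happens, for instance, when $\betav$ lies in $\Csec+\mathrm{span}\{\ev_{j(i)}:i\in\Ind\}$ with matching coefficients). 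So ``MDS column plus a multiple of a fixed vector'' does not by itself give the exactness of shortening, and this exactness is not literally what the proof of Lemma \ref{lem:inclusion_punctured_shortened} establishes either (that proof works with $\Cpub^\perp(\lambda)$ and codimension-one inclusions, not with the shortening map on $\Cpub^\perp$). The repair is immediate and is exactly the paper's move: since $W\subseteq\code{E}+\langle\av\rangle$ and the lemma's proof gives $\dim\code{E}\leq n-k-|\Ind|-1$, you get $\dim W\leq n-k-|\Ind|$ without any claim about columns of $\Hmp$; alternatively you can bypass $\dim W$ altogether and bound $\dim\sqcp{\code{E}+\langle\av\rangle}\leq 1+\dim\sqc{\code{E}}+\dim\code{E}$ as the paper does. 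With that substitution your argument coincides with the paper's proof.
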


This proposition is proved in Appendix \ref{sec:upper_bound}.

\begin{remark}\label{rem:upper_bound}
It turns out that a similar inequality also generally holds when $\Ind$ contains degree $2$ positions.
However in this case, the situation is more complicated and it might happen in rare cases that 
this upper-bound is not met but, roughly speaking, when it happens, 
the actual result remains {\em close} to this upper bound. An explanation of what happens in this case is given in Appendix \ref{sec:heuristic}.
Experimentally, we observed that (\ref{eq:upper_bound}) was satisfied 
even when $\Ind$ contained positions of $\Jind_2$.
\end{remark}

\begin{remark}\label{rem:distnoshort}
  The use of shortening is important since in general the (dual) public code
  itself is non distinguishable because its square equals the whole ambient
  space.
  However, for a part of the parameters proposed in \cite{BBCRS11b},
  the dual public code is distinguishable from a random code without shortening.  See \S \ref{sec:exp_res} for further details.
\end{remark}

\subsection{Description of the attack}
\subsubsection*{First step -- Distinguishing between positions in $\Jind_1$ and $\Jind_2$}\label{ss:distinguishing}

Roughly speaking the attack builds upon an algorithm which allows to distinguish between a position of degree $1$ and a position of degree $2$.
It turns out now that once we are able to distinguish the public code from a random one by shortening it in a set of positions $\Ind$ such that:
\begin{equation}\label{eq:dist_rule}
 \dim \sqcp{\sh{\Ind}{\Cpub^\perp}} <
\min\left\{n-|\Ind|,\binom{n-k-|\Ind|+1}{2} \right\},
\end{equation}
we can puncture 
$\sh{\Ind}{\Cpub^\perp}$ in a position $i$ that does not belong to $\Ind$ and this allows to 
distinguish degree $1$ positions from degree $2$ positions.
The dimension of the square code of this punctured code will differ drastically when $i$ is a degree $1$ position (or a certain type of degree $2$ position)
or a ``usual''  degree $2$ position.
When $i$ is a degree $1$ position it turns out that
\begin{equation}
\label{eq:pas_clair}
\dim \sqcp{\sh{\Ind}{\Cpub^\perp}} =  \dim \sqcp{\pu{i}{\sh{\Ind}{\Cpub^\perp}}},
\end{equation}
whereas for ``usual'' degree $2$ positions we observe that 
\begin{equation}
\label{eq:normal} 
\dim \sqcp{\sh{\Ind}{\Cpub^\perp}} =  \dim \sqcp{\pu{i}{\sh{\Ind}{\Cpub^\perp}}}+ 1.
\end{equation}
Sometimes (in the ``non usual'' cases), we can have positions of degree $2$ for which 
$$\dim \sqcp{\sh{\Ind}{\Cpub^\perp}} =  \dim \sqcp{\pu{i}{\sh{\Ind}{\Cpub^\perp}}}$$
as for degree $1$ positions. This happens for instance if shortening in $\Ind$ ``induces'' a degree $1$ position in $i$.
This arises mostly when the position $i$ of degree $2$ is such that 
$j(i)=\{j_1,j_2\}$ where either $j_1=j(i')$ or $j_2=j(i')$ for a position $i'$ of degree $1$ that belongs to $\Ind$.
Further details on these phenomena are given in Appendix \ref{sec:heuristic}. This phenomenon really depends on the choice
of $\Ind$. However, by choosing several random subsets $\Ind$ we quickly find a shortening set $\Ind$ for which the 
degree $2$ position we want to test behaves as predicted in \eqref{eq:normal}.
This yields an algorithm to decide whether a given position has degree 2. See Algorithm \ref{al:detection2} in Appendix \ref{sec:algo}.

Moreover, we explain below, how to use the above observations to compute
the whole set of positions of degree $2$.


\smallbreak

\noindent {\bf Procedure to compute $\Jind_2$}
\begin{itemize}
\item Choose a set of random subsets $\Ind_1, \ldots, \Ind_s$ (in our 
experimentations we always chose $s \approx 20$) whose cardinals
satisfy \eqref{eq:dist_rule}.
\item For $i = 1, \ldots, s$ compute $\sqc{\sh{\Ind_i}{\Cpub^{\perp}}}$
and call $\Jind_2(i)$ this set of positions satisfying
$$
\dim \sqc{\sh{\Ind_i}{\Cpub^{\perp}}} \neq \dim
\pu{j}{\sqc{\sh{\Ind_i}{\Cpub^{\bot}}}}.
$$
\item Set $\Jind_2 = \Jind_2(1) \cup \cdots \cup \Jind_2(s)$.
\end{itemize}

The above described procedure
of degree $2$ is summarized by Algorithm \ref{al:compute_degree2}
(see Appendix \ref{sec:algo}).

\subsubsection*{Second step -- Transforming degree 2 positions into degree 1 ones}\label{ss:finishing}
This step reposes on the following statements
proved in Appendix \ref{sec:plus}.

\begin{proposition}
\label{pr:plus}
Let $i_1\in \Jind_1$ and $i_2 \in \Jind_2$ be a position associated to $i_1$.
Let $\Dm(\alpha,i_1,i_2)$ be an $n \times n$ matrix which is the identity matrix with an additional entry in column $i_2$ and row $i_1$
that is equal to $\alpha$.
Define $\code{C} \eqdef \Cpub^\perp \Dm_{\alpha,i_1,i_2}$. 
If $\alpha = -\frac{T_{i_2j_1}}{T_{i_1j_1}} $, then there exists  $\Rm'$ of rank at most one
such that  
\begin{equation}
\label{eq:two_one}
\code{C} = \Csec^\perp ({{\Tm}'}^T + {{\Rm}'}^T)
\end{equation}
where $\Tm'$ differs from $\Tm$ only in row $i_2$ and column $j_1$, the corresponding entry being now equal to 
$0$.
\end{proposition}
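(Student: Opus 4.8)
The plan is to track explicitly what right-multiplication by $\Dm_{\alpha,i_1,i_2}$ does to the parity-check description of $\Cpub$ and to show that for the prescribed value of $\alpha$ the rank-$1$ perturbation is ``repaired'' along the column $j_1$. First I would start from Lemma~\ref{lem:basics}, which gives $\Cpub^\perp = \Csec^\perp(\Tm^T + \Rm^T)$, so that
\begin{equation*}
\code{C} = \Cpub^\perp \Dm_{\alpha,i_1,i_2} = \Csec^\perp (\Tm^T + \Rm^T)\Dm_{\alpha,i_1,i_2}.
\end{equation*}
Hence it suffices to analyse the matrix $(\Tm^T + \Rm^T)\Dm_{\alpha,i_1,i_2} = \Tm^T\Dm_{\alpha,i_1,i_2} + \Rm^T\Dm_{\alpha,i_1,i_2}$. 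Since $\Dm_{\alpha,i_1,i_2}$ is the identity plus $\alpha$ in entry $(i_1,i_2)$, right-multiplying any matrix $\Mm$ by it adds $\alpha$ times the $i_1$-th column of $\Mm$ to its $i_2$-th column, leaving everything else unchanged.

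Next I would compute $\Tm^T\Dm_{\alpha,i_1,i_2}$. The $i_1$-th column of $\Tm^T$ is the $i_1$-th row of $\Tm$; since $i_1 \in \Jind_1$, this row has a single nonzero entry $T_{i_1 j(i_1)}$, and by hypothesis $i_2$ is a position associated to $i_1$, meaning $j_1 \eqdef j(i_1) \in j(i_2)$. Writing $j(i_2) = \{j_1,j_2\}$, the $i_2$-th column of $\Tm^T$ (= $i_2$-th row of $\Tm$) has its two nonzero entries in rows $j_1$ and $j_2$, namely $T_{i_2 j_1}$ and $T_{i_2 j_2}$. Adding $\alpha$ times the $i_1$-column to the $i_2$-column changes the $(j_1,i_2)$ entry from $T_{i_2 j_1}$ to $T_{i_2 j_1} + \alpha T_{i_1 j_1}$, which vanishes precisely when $\alpha = -T_{i_2 j_1}/T_{i_1 j_1}$; all other entries of $\Tm^T$ are untouched. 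Transposing back, $\Tm^T\Dm_{\alpha,i_1,i_2} = {\Tm'}^T$ where $\Tm'$ agrees with $\Tm$ except that entry $(i_2,j_1)$ is now zero — exactly the claimed $\Tm'$, which still has row/column weights at most those of $\Tm$ and is obtained by removing one nonzero entry.

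It then remains to check that $\Rm^T\Dm_{\alpha,i_1,i_2}$ has rank at most one, and to set ${\Rm'}^T \eqdef \Rm^T\Dm_{\alpha,i_1,i_2}$. This is immediate: $\Rm^T = \lambda^T\mu$ for vectors $\lambda,\mu$ (the $z=1$ case of the factorisation $\Rm = \alphav^T\betav$ from the scheme description), so $\Rm^T\Dm_{\alpha,i_1,i_2} = \lambda^T(\mu\Dm_{\alpha,i_1,i_2})$ is still an outer product of a column vector with a row vector, hence of rank at most one. Combining, $\code{C} = \Csec^\perp({\Tm'}^T + {\Rm'}^T)$, which is \eqref{eq:two_one}. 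The only mildly delicate point — and the one I would state carefully rather than the linear algebra — is the bookkeeping of which column indices of $\Tm^T$ correspond to which rows of $\Tm$, i.e. keeping the transpose straight so that ``column $j_1$, row $i_2$ of $\Tm$'' is indeed the entry being zeroed; once the indexing convention from the Notation subsection is fixed, the computation is a one-line column operation and there is no real obstacle.
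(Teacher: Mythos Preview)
Your proof is correct and follows essentially the same route as the paper's: both start from $\Cpub^\perp=\Csec^\perp(\Tm^T+\Rm^T)$, distribute the right-multiplication by $\Dm_{\alpha,i_1,i_2}$, observe that it is a single column operation on $\Tm^T$ which, for the prescribed $\alpha$, kills the $(j_1,i_2)$ entry, and remark that $\Rm^T\Dm_{\alpha,i_1,i_2}$ keeps rank at most one. If anything, your write-up is slightly more explicit than the paper's about why only the single entry $(j_1,i_2)$ of $\Tm^T$ changes (namely because the $i_1$-th column of $\Tm^T$ has its unique nonzero in row $j_1$) and about the identification $j_1=j(i_1)$; the paper leaves both implicit.
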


This proposition is exploited as follows.
\begin{itemize}
\item  We first compute for a degree $1$
position $i_1$ the set of degree $2$ positions $i_2$  such that
$j(i_1) \in j(i_2)$.
These positions $i_2$ can be detected by checking if 
$i_2$ has now become a degree $1$ position for $\sh{\{i_1\}}{\Cpub^\perp}$ (this is the case if and only if $j(i_1) \in j(i_2)$).
See Algorithm \ref{al:Associated} in Appendix \ref{sec:algo}.
\item Once such a pair  $(i_1, i_2)$ has been found we try all possible values for $\alpha \in \fq^\times$ until we obtain a code $\code{C}$ for which the corresponding $\Tm'$  contains a row of index $i_2$ which 
is now of Hamming weight $1$. That is to say: $i_2$ became a position of degree
$1$ for $\CC$. This can be easily checked by using the previous technique to 
distinguish between a position of degree $1$ or $2$.
See Algorithm \ref{al:Eliminate}, in Appendix \ref{sec:algo}.
\item In other words, when we are successful, we obtain a new code $\code{C}$ for which there is one more row of weight $1$. We iterate this process
by replacing $\Cpub^\perp$ by $\code{C}$ and $\Jind_1$ by $\Jind_1 \cup \{i_2\}$ until we do not find such pairs $(i_1, i_2)$.
For the values of $m$ chosen in \cite{BBCRS11b} and with rows of $\Tm$ which were all of 
weight $1$ or $2$ we ended up with $\Tm'$ which was a permutation matrix and a code $\code{C}$ which was linked to the secret code by 
$$
\code{C} = \Csec^\perp (\Pi + \Rm')
$$
where $\Pi$ is a permutation matrix and $\Rm'$ a matrix of rank at most $1$. 
To finish the attack, we just apply the attack described in \cite[Sec.4 ]{CGGOT:dcc14} to recover $\Csec$.
\end{itemize}

Algorithm \ref{al:complete_attack} given in Appendix \ref{sec:algo} describes
the complete attack.


\subsection*{Case of remaining degree-$2$ positions}

It could happen that the previoulsy decribed method is unsufficient to
transform every degree $2$ position into a degree $1$.
It could for instance
happen if there is a position $i$ of degree $2$ such that
for all position $i'$ of degree $1$, $j(i') \notin j(i)$.
In such a situation, no position of degree $1$ can be used to eliminate this position of degree $2$.

This problem can be addressed as soon as the set of positions of degree
$1$ contains an information set of the code.
We describe the strategy to conclude the attack in such a situation.

Let $\CC$ be the code obtained after performing the two steps of
the attack and assume that there remains
as nonempty set $\Jind_2$ of positions of degree $2$,
which are known (since they have been identified during the first step of
the attack).
Here is the strategy
\begin{enumerate}
\item Puncture $\CC$ at $\Jind_2$. The punctured code
is of the form 
\begin{equation}\label{eq:puJind2}
\CC' (\Im + \Rm')  
\end{equation}
where $\CC'$ is a GRS code,
 $\Im$ is the identity matrix and $\Rm'$ a rank $1$ matrix.
\item Perform the attack of \cite{CGGOT:dcc14} on $\pu{\Jind_2}{\CC}$.
We get the knowledge of a support $\xv'$ a multiplier $\yv'$
and a rank $1$ matrix $\Rm'$ such that
$$
\CC' = \GRS{k}{\xv'}{\yv'} (\Im + \Rm').
$$
Moreover, we are able to identify the polynomials $P_1, \ldots, P_k$
yielding the rows of the public matrix $\Gmp$. 
\item For all $x\in \fq$ which is not in the support $\xv'$ of $\CC'$, 
compute the column
$$
\left(
  \begin{array}{c}
    P_1(x) \\ P_2(x) \\ \vdots \\ P_k(x)
  \end{array}
\right)
$$
and join it to the matrix $\Gmp$. By this manner we get new positions of degree
$1$ which can be used to eliminate the remaining positions of degree $2$.
\end{enumerate}

\begin{remark}
In our experiments, this situation never happened: we
have always eliminated all the degree $2$ positions using Proposition \ref{pr:plus}.
\end{remark}





\section{Limits and Complexity of the Attack}
\subsection{Choosing appropriately the cardinality of $\Ind$}

By definition of the density $m$, the sets $\Jind_1$ and
$\Jind_2$ have respective cardinalities $(2-m)n$ and $(m-1)n$.
In what follows, we denote by $R$ the rate of the public code namely $ R= k/n$.
Let us recall that the attack shortens the dual of a public code which is of dimension $n -k$. 
The cardinality of $\Ind$ is denoted by $a$.
We list the constraints we need to satisfy for the success of the attack.
\begin{enumerate}
\item The shortened code should be reduced to the zero space, which implies that $a < n - k$.

\item The code punctured at $\Jind_2$ must contain an information set, that is to say:
  \begin{equation}
    \label{eq:constraint1}
    n-k \leq |\Jind_1|.
  \end{equation}
 It is clear that \eqref{eq:constraint1} is equivalent to $m \leq 1 + R$.
 
\item The computed square code in Proposition \ref{prop:upper_bound}
should also be different from the full space which implies:
\begin{equation}
  \label{eq:constraint2}
 3(n-k-a)+|\Jind_2| - 1 < n-a 
\end{equation}
One can easily check that \eqref{eq:constraint2} is equivalent to:
\begin{equation}\label{eq:amin}
a \geq \frac{1}{2} \Big( (1+m) n - 3 k \Big).
\end{equation}

\item Finally, to have good chances that the dimension of the square code
reaches the upper bound given by Proposition \ref{prop:upper_bound}, we also need:
\begin{equation}
  \label{eq:constraint3}
3(n-k-a)+|\Jind_2|-1 <   {n-k-a+1 \choose 2} 
\end{equation}
which is equivalent to the inequality:
\begin{equation}\label{eq:constraint3_2}
a^2 + \Big(5 - 2(n-k)\Big) a + (n-k)^2 - 5(n-k) +2(1-m)n \ge 0
\end{equation}
Considering \eqref{eq:constraint3_2} as an inequality involving a  degree-2 polynomial in $a$, we can 
check that its discriminant is equal to $\Delta \eqdef 8(m-1)n +25$, so that its roots are $a_0$ and $a_1$ where:
\begin{equation} \label{a:limit}
a_0 \eqdef n-k -\frac{5}{2} - \frac{1}{2}\sqrt{\Delta}
\text{~~ and ~~}
a_1 \eqdef n-k -\frac{5}{2} + \frac{1}{2}\sqrt{\Delta}.
\end{equation}
Let us recall that in order to have \eqref{eq:constraint3_2} satisfied, we should have $a\leq a_0$ or $a \geq a_1$. Because
of the constraint $a < n - k$ and since $a_1 > n -k$, the only case to study is  $a\leq a_0$.
Combining \eqref{eq:amin} with $a\leq a_0$, we obtain:
$$
\frac{1}{2} \Big( (1+m) n - 3 k \Big) \leq a_0.
$$
which is equivalent to the following inequality involving this time a degree-2 polynomial in $m$:
\begin{equation}
n^2 m^2 + 2n(1-n-k)m + 2kn + k^2 - 10k + n^2 - 2n \geq 0.
\end{equation}
The discriminant of this polynomial is $n^2(8k+1)$ and the roots are:
$$
m_0 \eqdef 1+ R - \frac{1}{n}  - \sqrt{ \frac{8}{n} R + \frac{1}{n^2}}
\text{~~~ and ~~~}
m_1 \eqdef  1+ R - \frac{1}{n} + \sqrt{ \frac{8}{n} R + \frac{1}{n^2}}\cdot
$$
Because of the fact that $m \leq 1 + R$ from \eqref{eq:constraint1}, and since $m_1 > 1+R$, we conclude 
that the attack can be applied as long as $m \leq m_0$, that is to say:
\begin{equation} \label{success:constraint}
m \leq 1+ R - \frac{1}{n}  - \sqrt{ \frac{8}{n} R + \frac{1}{n^2}}\cdot
\end{equation}

\item Finally, the last step of the attack consists in performing the attack of  \cite{CGGOT:dcc14}.
\end{enumerate}

\begin{remark}
This upper-bound is roughly $1+R$.
  In \cite{BBCRS11b}, the authors suggest to choose 
$m \approx 2-R$ for rates $R> \frac{1}{2}$, which is well within the reach of the present attack.
\end{remark}



\subsection{Estimating the complexity}

As explained in Proposition \ref{prop:complexity}, the 
square of a code of dimension $k$ and length $n$ can be computed in $O(n^2k^2)$.
Let us study the costs of the steps of the attack.

\begin{itemize}
\item {\bf Step 1. Finding the positions of degree $2$.}
For a constant number of subsets $\Ind$ of length $a \leq a_0$
where $a_0$ is defined in \eqref{a:limit}, we 
shorten $\Cpub^{\bot}$ and compute its square. If $a$ is close to $a_0$
then, the shortened code
has dimension $n-k-a = O(\sqrt{n})$. Hence, the computation of its square
costs $O(n^3)$. Thus this first step costs $O(n^3)$ operations in $\fq$.

\item {\bf Step 2. Transforming degree-2 positions into degree 1 positions.}
This is the most expensive part of the attack.
For a given position $i_1 \in \Jind_1$, the computation of positions $i_2$ of
degree $2$ such that\footnote{Equivalently, there exists an
integer $j$ such that $\Tm_{i_1, j} \neq 0$ and $\Tm_{i_2, j} \neq 0$.} $j(i_1) \in j(i_2)$ consists
essentially in shortening the dual public code at $i_1$ and
applying to the shortened code the first step. This costs $O(n^3)$.
Then, the application of Proposition \ref{pr:plus} to transform $i_2$
requires to proceed to at most $q$ linear combinations and, for each one,
to check whether the position became of degree $1$. Each check
has mostly the same cost as the first step, that is $O(n^3)$.
Thus, the overall cost to
reduce one position of degree $2$ is $O(n^4)$ and hence the cost
of this second step is $O(n^5)$.

\item {\bf Step 3.} According to \cite{CGGOT:dcc14},
it is in $O(n^6)$.
\end{itemize}

\section{Experimental Results}\label{sec:exp_res}
Table \ref{table} gathers experimental results obtained when the attack is programmed in Magma~V2.20-3 \cite{magma}.
The attacked parameters are taken from \cite[Tables 3 \& 4]{BBCRS11b}
The timings given  are obtained with Intel\textsuperscript{\textregistered} Xeon 2.27GHz and 72 Gb of RAM. Our programs are far from
being optimized and probably improved programs could provide better timings
and memory usage.

The running times for codes of length $346$ are below 5 hours and those
for codes of length $546$ can be a bit longer than one day. The total memory
usage remains below $100$Mb for codes of length $346$ and $500$Mb for codes
of length $546$.

\begin{table}[!h]
\begin{center}
\begin{tabular}{p{3.5cm}p{1.3cm}p{1.4cm}p{4cm}} \toprule 
 $(q,n,k,z)$            & $m$  & Step 1 & Step 2 \\ \midrule
(347, 346, 180, 1) & 1.471 & 15s &  18513s  ($\approx$5 hours)  \\
(347, 346, 188, 1) & 1.448 & 8s & 10811s  ($\approx$3 hours)  \\
(347, 346, 204, 1) & 1.402 & 10s & 8150s ($\approx$2.25 hours)  \\
(347, 346, 228, 1) & 1.332 & 15s & 9015s ($\approx$2.5 hours)  \\
(347, 346, 252, 1) & 1.263 & 36s & 10049s ($\approx$2.75 hours)  \\
(347, 346, 268, 1) & 1.217 & 3s & 14887s ($\approx$4 hours)  \\
(347, 346, 284, 1) & 1.171 & 3s & 7165s ($\approx$2 hours)  \\
\midrule
(547, 546, 324, 1) & 1.401 & 60s & 58778s ($\approx$16 hours)  \\
(547, 546, 340, 1) & 1.372 & 83s & 72863s ($\approx$20 hours)  \\
(547, 546, 364, 1) & 1.328 & 100s & 72343s ($\approx$20 hours)  \\
(547, 546, 388, 1) & 1.284 & 170s & 85699s ($\approx$24 hours)  \\
(547, 546, 412, 1) & 1.240 & 15s &   157999s ($\approx$43 hours)  \\
(547, 546, 428, 1) & 1.211 & 15s &   109970s ($\approx$30,5 hours)  \\
 \bottomrule
\end{tabular}
\end{center}
\caption{Running times}\label{table}
\end{table}

\begin{remark}
  Since the algorithms include many random choices, the identification
  of pairs $(i_1, i_2)$, where $i_1\in \Jind_1$ and $i_2\in \Jind_2$
  such that $j(i_1) \in j(i_2)$ might happen quickly or be rather long.
  This explains the important gaps between different running times.
\end{remark}

\begin{remark}
  Actually some parameters proposed in \cite{BBCRS11b} were directly
  distinguishable without even shortening. This holds for
  $(q,n,k)=(347,346, 268)$, $(q,n,k)=(347,346, 284)$ and $(q,n,k)=(547,546, 428)$
  with $m$ respectively equal to $1.217$, $1.171$ and $1.211$. This explains
  why the first step is quicker for these examples.
\end{remark}

\begin{remark}
   The examples $[346, 180]_{347}$
   and $[346, 188]_{347}$  do not satisfy \eqref{success:constraint}. However,
   they are distinguishable by shortening and squaring and the attack works
   on them. Because of some cancellation phenomenon for positions of degree 2
   which we do not control, it may happen that the upper bound in
   Proposition \ref{prop:upper_bound} is not sharp and that some shortenings
   of $\Cpub^{\bot}$
   turn out to be distinguishable while our formulas could not
   anticipate it.
\end{remark}

The above remark is of interest since it points out that our
attack might work for values of $m$ above $1+R$.

\section{Concluding Remarks}
\label{sec:conclusion}

The papers \cite{BaldiChiara07,BaldiChiara08,BBCRS11a,BBCRS11b} can be seen as an attempt of replacing 
the permutation matrix in the McEliece scheme by a more complicated 
transformation.
Instead of having as in the McEliece scheme a relation between the secret 
code $\Csec$ and the public code $\Cpub$ of the form
$\Csec = \Cpub \Pi$ where $\Pi$ is a permutation matrix, it was  chosen in \cite{BaldiChiara07,BaldiChiara08}
that 
$$
\Csec = \Cpub \Tm
$$
where $\Tm$ is a sparse matrix of density $m$ or as
$$
\Csec = \Cpub (\Tm + \Rm)
$$
where $\Tm$ is as before   
and $\Rm$ is of very small rank $z$ (the case of rank $1$ being probably the only practical way of choosing this rank as will be
discussed below) as in \cite{BBCRS11a,BBCRS11b}.
It was advocated that this allows to use for the secret code $\Csec$, codes which are well known to be weak in the usual 
McEliece cryptosystem such as LDPC codes \cite{BaldiChiara07,BaldiChiara08} or 
GRS codes  \cite{BBCRS11a,BBCRS11b}.
Interestingly enough, it turns out that for LDPC codes this basically amounts choosing a McEliece system where the density of the parity-check matrix is increased by a large
amount and the error-correction capacity is decreased by the same multiplicative constant. The latter approach has been studied in \cite{MTSB13}, it leads to schemes
with slightly larger decoding complexity but that have at least partial security proofs.

In the case of GRS codes, the first attempt \cite{BBCRS11a} of choosing for $\Tm$ a permutation matrix was broken in \cite[Sec.4]{CGGOT:dcc14}.
It was suggested later on \cite{BBCRS11b} that this attack can be avoided by choosing $\Tm$ of larger density. 
In order to reduce the public key size when compared to the McEliece scheme based on Goppa codes,
rather moderate values of $m$ between $1$ and $2$ ($m=1.4$ for instance) were chosen in \cite{BBCRS11b}. We show here that the parameters 
proposed in \cite{BBCRS11b} can be broken by a new attack computing first the dimension of the square code of shortened versions of the dual of the public code
and using this to reduce the problem to the original problem \cite{BBCRS11a} when $\Tm$ is a permutation matrix. 
This attack can be avoided by choosing larger values for $m$ and/or $z$, but this comes at a certain cost as we now show.

\par{\bf Increasing $z$.}
Increasing $z=1$ to larger values of $z$ avoids the attack given here, though some of the ideas of \cite{CGGOT:dcc14} might be used in this new context
to get rid of the $\Rm$ part in the scheme and might lead to an attack of reasonable complexity when $z=2$ by trying first to guess
several codewords which lie in the code $\CC \eqdef \Csec^\perp \Tm^T \cap \Cpub^\perp $ (this code is of codimension at least
$z$ in $\Cpub^\perp$). Once $\CC$ is found, we basically have to recover $\Tm$ and the approach used in this paper can be applied to it.
To avoid such an attack, rather large values of $z$ have to be chosen, but the decryption cost becomes prohibitive by doing so.
Indeed, decryption time is of order $q^z C$ where $C$ is the decoding complexity of the underlying GRS code.
Choosing $z=2$ is of questionable practical interest and $z>2$ becomes probably unreasonable.

\par{\bf Increasing $m$.}
Choosing values for $m$ close enough to $2$ will avoid the attack presented here.
However  this
also reduces strongly the gain in key size when compared to the McEliece scheme based on Goppa or alternant codes. 
Indeed, assume for simplicity $m=2$.
We can use in such a case for the secret code a GRS code over $\fq$ of dimension $k=n-2t$ and add errors of weight 
$\leq \frac{t}{2}$ in the BBCRS scheme. The public key size of such a scheme is however not better than choosing
in the McEliece scheme a Goppa code of the same dimension $n-2t$ but which is the subfield subcode of
a GRS code over $\F_{q^2}$ of dimension $n-t$, and which can also correct $\frac{t}{2}$ errors.
This Goppa code has the very same parameters and provides the same
security level.  For this reason, one loses the advantages of using GRS codes
when choosing $m$ close to $2$.
Thus, to have interesting key sizes and to resist to our attack $m$ should be smaller than $2$ and larger than $1+R$. One should however be careful, since, as explained in \S \ref{sec:exp_res}, it is still unclear whether the attack fails for $m$ closely above $1+R$. 

On the other hand, it might be interesting for theoretical reasons to understand better the security of the BBCRS scheme
for larger values of $m$. There might be a closer connection than what it looks between the BBCRS scheme
with density $m$ and the usual McEliece scheme with (possibly non-binary) Goppa codes of extension degree $m$.
The connection is that the case  $m=2$ is in both cases the limiting case where the distinguishing approach of
\cite{CGGOT:dcc14,COT:EC14} might work (in \cite{COT:EC14}, the attack only works because wild Goppa codes are studied
and this brings an additional power to the distinguishing attack).
It should also be added that it might be interesting to study the choice of $\Csec$ being an LDPC code and $\Csec = \Cpub (\Tm + \Rm)$ since 
here adding $\Rm$ of small rank can also change rather drastically the property of  $\Cpub$ being an LDPC code (which is at the heart of the key 
attacks on McEliece schemes based on LDPC codes).

\bibliographystyle{splncs03}
\bibliography{crypto,qubib}

G\cleardoublepage
\newpage

\appendix

\section{Proof of Lemmas  \ref{lem:punc_short_GRS}
 to \ref{lem:inclusion_punctured_shortened}}
\label{sec:A}

\begin{notation}\label{nota:D}
  In the proofs to follow, the code $\code{D}$ always denotes
the one defined in Lemma \ref{lem:puncture_1}, that is
$$
\code{D} \eqdef \Csec^\perp \Tm ^T.
$$
\end{notation}

  \begin{proof}[Proof of Lemma \ref{lem:punc_short_GRS}]
    A codeword of $\GRS{k}{\xv}{\yv}$
    is of the form $(y_1 P(x_1), \ldots, y_n P(x_n))$ where $P\in \fq[X]$
    has degree $<k$.
    Puncturing consists in removing the entries with index in $\Ind$,
    which yields
    (\ref{eq:punctGRS}). To lie in the shortened code,
    the polynomial $P$ should vanish on the elements of $\Ind$. Thus,
    it should be of the form $P(X) = (\prod_{i \in \Ind} (X-x_i))Q(X)$
    for some polynomial $Q$ of degree $<k-|\Ind|$.
    Hence, the words of the shortened code are of the form
    $$
    {\left(y_i \left(\prod_{j \in \Ind} (x_i-x_j)\right)Q(x_i) \right)}_{i \in \{1, \ldots, n\} \setminus \Ind},
    $$
    where $Q$ has degree $<k-|\Ind|$, which yields (\ref{eq:shortGRS}).
  \end{proof}

\begin{proof}[Proof of Lemma \ref{lem:puncture_1}]   For any  codeword $\cv=(c_i)_{1 \leq i \leq n}$ in $\code{D}$, there exists a polynomial $P(X) \in \fq[X]$ of 
degree less than $n-k$ such that for all $i \in \{1,\dots,n\}$, we have
\begin{eqnarray*}
c_i & = & \sum_{j=1}^n y_j P(x_j) T^T_{ji} \\
& = & \sum_{j=1}^n y_j P(x_j) T_{ij}.
\end{eqnarray*}  
When $i$ is in $\Jind_1$, we clearly have
$c_i = T_{ij(i)} y_{j(i)} P(x_{j(i)})$.
This implies that
\begin{equation}
\label{eq:easy}
\pu{\Jind_2}{\code{D}} \subseteq \GRS{n-k}{\uv}{\vv}.
\end{equation}
\end{proof}

\begin{proof}[Proof of Lemma \ref{lem:puncture_2}]
By definition of $\Csec^\perp(\lambda)$, we have $\Csec^\perp(\lambda) \Rm^T = 0$
and hence,
\begin{eqnarray*}
\Cpub^\perp(\lambda) & = & \Csec^\perp(\lambda) (\Tm^T + \Rm^T) \\
& = &\Csec^\perp(\lambda)\Tm^T  \\
& \subseteq & \code{D}.
\end{eqnarray*}
Then, using (\ref{eq:easy}), we get
\begin{eqnarray*}
\pu{\Jind_2}{\Cpub^\perp(\lambda)} & \subseteq & \pu{\Jind_2}{\code{D}} \\
& \subseteq & \GRS{n-k}{\uv}{\vv}.
\end{eqnarray*}
This is precisely the inclusion given by \eqref{eq:punctured}.

Let us now assume that $\Jind_1$ contains an information set of $\Csec^\perp$ and $\Tm^T$ is invertible.
Consider a codeword $\cv$ in $\Cpub^\perp$.
There exists $\cv'$ in $\Csec^\perp$ such that
\begin{equation}
\label{eq:one}
\cv = \cv' (\Tm^T+\Rm^T).
\end{equation}
Notice now that
\begin{eqnarray*}
\cv' \Rm^T & = & \cv' (\lambda^T \mu) \\
& = & (\scp{\cv'}{\lambda}) \mu.
\end{eqnarray*}
Let $\pv \eqdef \pu{\Jind_2}{\cv'\Tm^T}$. Since $\Jind_1$ contains an information set of $\Csec^\perp \Tm^T$ and $\Tm^T$ is invertible, 
 the composite map:
$$
\Cpub^{\bot} \longrightarrow \Csec^{\perp} \longrightarrow \Csec^{\perp}\Tm^T \longrightarrow \pu{\Jind_2}{\Csec^{\perp}\Tm^T}
$$
is an isomorphism and hence, we deduce that there exists
$\bv$ in $\fq^{n-|\Jind_2|}$ (which does not depend on $\cv'$) such that
$$
\scp{\cv'}{\lambda} = \scp{\pv}{\bv}.
$$
We define $\av$ by $\av \eqdef \pu{\Jind_2}{\mu}$ and we obtain by using
\eqref{eq:one} that
\begin{eqnarray*}
\pu{\Jind_2}{\cv} & = & \pu{\Jind_2}{\cv' \Tm^T} + \pu{\Jind_2}{\cv'\Rm^T}\\
 & = & \pv + (\scp{\pv}{\bv}) \av,
\end{eqnarray*}
which proves \eqref{eq:further}.
\end{proof}

\begin{proof}[Proof of Lemma \ref{lem:inclusion_punctured_shortened}]
We start by the remark  that there exists a vector $\av_0 \in \fq^n$ such that
$$
\Cpub^{\bot} = \Cpub(\lambda)^{\bot} + <\av_0>.
$$
Now, after shortening at $\Ind_1$, there exists $\av_1 \in \fq^{n-|\Ind_1|}$
such that
$$
\sh{\Ind_1}{\Cpub^{\bot}} = \sh{\Ind_1}{\Cpub(\lambda)^{\bot}} + <\av_1>
$$
and finally there exists a vector $\av$ in $\fq^{n-s-|\Jind_2|}$ such that
 \begin{equation}
 \label{eq:starting_point}
 \pu{\Jind_2}{\sh{\Ind_1}{\Cpub^\perp}} \subseteq \pu{\Jind_2}{\sh{\Ind_1}{\Cpub^\perp(\lambda)}} + <\av>. 
 \end{equation}
Moreover, we have, 
 \begin{eqnarray}
 \pu{\Jind_2}{\sh{\Ind_1}{\Cpub^\perp(\lambda)}} & = & \pu{\Jind_2}{\sh{\Ind_1}{\Csec^\perp(\lambda) (\Tm^T+\Rm^T)}} \nonumber\\
 & = &  \pu{\Jind_2}{\sh{\Ind_1}{\Csec^\perp(\lambda) \Tm^T}} \nonumber\\
 & \underset{\textrm{Codim}\ 1}{\subseteq} & \pu{\Jind_2}{\sh{\Ind_1}{\code{D}}}, \label{eq:punctured_D}
 \end{eqnarray}
where we remind that $\code{D}$ is defined in Notation~\ref{nota:D}.
 From the definition of $\code{D}$ we know that
 $$
 \code{D} = \left\{ \left. \left(\sum_{j=1}^n y_j T_{ij} P(x_j) \right)_{i=1}^n 
    ~ \right| ~ \deg P < n-k\right\}.
 $$
 Observe that for a position $i \in \Jind_1$, we have 
 \begin{equation}
 \label{eq:fundamental}
 \sum_{j=1}^n y_j T_{ij} P(x_j) = y_{j(i)}T_{ij(i)} P(x_{j(i)}).
 \end{equation}
 Such a coordinate vanishes if and only if $P(X)$ is divisible by  $(X-x_{j(i)})$ which implies that
 $$
 \sh{\Ind_1}{\code{D}} = \left\{ \left. \left(\sum_{j=1}^n y_j T_{ij} \prod_{l \in \Ind_1}(x_j-x_{j(l)})  P(x_j) \right)_{i \notin \Ind_1} ~ \right| ~ \deg P < n-k-s\right\}.
 $$
 From this and using \eqref{eq:fundamental} again, we obtain 
 $$
 \code{E}_0 \eqdef \pu{\Jind_2}{\sh{\Ind_1}{\code{D}}} =
\left\{ \left. \left(y_{j(i)} T_{ij(i)} \prod_{l \in \Ind_1}(x_{j(i)} -x_{j(l)})  P(x_{j(i)}) \right)_{i \in \Jind_1 \setminus \Ind_1} ~\right|~ \deg P < n-k-s\right\} \cdot
 $$
 This is clearly a GRS code of degree $n-k-s$.
Set 
\begin{equation}\label{eq:codeE}
\code{E} \eqdef \pu{\Jind_2}{\sh{\Ind_1}{\Cpub(\lambda)^{\perp}}}.
\end{equation}
Then, $\code{E}$ is indeed a subcode of the GRS code
$\code{E}_0$ of codimension $1$
and the lemma follows by combining this equation with
\eqref{eq:starting_point} and \eqref{eq:punctured_D}
and using that the left-hand term in \eqref{eq:punctured_D} has codimension
$1$ in the right hand one.
\end{proof}

\section{Proof of Proposition  \ref{prop:upper_bound}}
\label{sec:upper_bound}

To prove this result we will need a few additional results involving  general inequalities concerning the dimension of
the square code.

\begin{lemma}
\label{lem:super_easy}
For all linear codes $\AC,\BC,\CC \in \fq^n$ and all subsets $\Ind$ of code positions,  we have
\begin{eqnarray}
\dim \sqc{\left(\AC+\BC\right)} & \leq & \dim \sqc{\AC} + \dim \sqc{\BC} + \dim \left( \stp{\AC}{\BC}\right)
\label{eq:sum}\\
\dim \sqc{\CC} & \leq & \dim \sqc{\pu{\Ind}{\CC}} + |\Ind| \label{eq:sq_puncture}.
\end{eqnarray}
\end{lemma}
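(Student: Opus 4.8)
The plan is to establish each of the two inequalities separately, both by elementary linear-algebra arguments about the star product. Neither should present real difficulty; the only ``obstacle'' is making sure the two auxiliary facts about $\star$ (distributivity over sums, and commutation with puncturing) are invoked cleanly.

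For \eqref{eq:sum}, I would start from the definition of the square code: $\sqc{(\AC+\BC)}$ is spanned by all products $\uv \cwp \vv$ with $\uv,\vv \in \AC+\BC$. Writing $\uv = \av + \bv$ and $\vv = \av' + \bv'$ with $\av,\av' \in \AC$ and $\bv,\bv' \in \BC$, bilinearity of the componentwise product gives
$$
\uv \cwp \vv = \av\cwp\av' + \av\cwp\bv' + \bv\cwp\av' + \bv\cwp\bv' \in \sqc{\AC} + \stp{\AC}{\BC} + \sqc{\BC}.
$$
Hence $\sqc{(\AC+\BC)} \subseteq \sqc{\AC} + \stp{\AC}{\BC} + \sqc{\BC}$, and since the dimension of a sum of subspaces is at most the sum of their dimensions, \eqref{eq:sum} follows.

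For \eqref{eq:sq_puncture}, I would consider the coordinate-projection (puncturing) map $\pi \colon \fq^n \to \fq^{n-|\Ind|}$ that deletes the coordinates indexed by $\Ind$. This map is $\fq$-linear, and for any two vectors $\uv,\vv$ one has $\pi(\uv\cwp\vv) = \pi(\uv)\cwp\pi(\vv)$; since puncturing also commutes with taking linear spans, we get $\pi(\sqc{\CC}) = \sqc{\pu{\Ind}{\CC}}$. Now apply the rank--nullity theorem to the restriction of $\pi$ to the subspace $\sqc{\CC}$:
$$
\dim \sqc{\CC} = \dim \pi(\sqc{\CC}) + \dim\bigl(\sqc{\CC}\cap \ker \pi\bigr) \leq \dim \sqc{\pu{\Ind}{\CC}} + \dim \ker \pi.
$$
Since $\ker \pi$ is exactly the space of vectors of $\fq^n$ supported on $\Ind$, it has dimension $|\Ind|$, which yields \eqref{eq:sq_puncture} and completes the proof.

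The step most worth spelling out carefully is the identity $\pi(\sqc{\CC}) = \sqc{\pu{\Ind}{\CC}}$: one should note both inclusions follow from the fact that $\pi$ is linear and multiplicative for $\cwp$ on generators, so images of spanning sets are spanning sets. Everything else is routine.
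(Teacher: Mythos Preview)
Your argument for \eqref{eq:sum} is exactly the paper's: expand $(\AC+\BC)^2$ bilinearly to get the inclusion (the paper even states it as an equality) and then bound the dimension of a sum of subspaces.

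For \eqref{eq:sq_puncture} your route is correct but differs from the paper's. You apply rank--nullity to the puncturing map restricted to $\sqc{\CC}$, using that $\pi$ is multiplicative for $\cwp$ so that $\pi(\sqc{\CC})=\sqc{\pu{\Ind}{\CC}}$, and that $\ker\pi$ has dimension $|\Ind|$. The paper instead reduces \eqref{eq:sq_puncture} to \eqref{eq:sum}: it embeds $\CC$ into the direct sum ${\rm Ext}(\pu{\Ind}{\CC})\oplus\code{A}(\Ind)$, where ${\rm Ext}(\cdot)$ denotes extension by zeros back to length $n$ and $\code{A}(\Ind)$ is the full $|\Ind|$-dimensional space of vectors supported on $\Ind$; then \eqref{eq:sum} applies with the cross term $\stp{{\rm Ext}(\pu{\Ind}{\CC})}{\code{A}(\Ind)}=\{0\}$ (disjoint supports) and $\dim\sqc{\code{A}(\Ind)}=|\Ind|$. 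Your approach is arguably more direct and does not rely on the first inequality; the paper's has the mild conceptual advantage of exhibiting \eqref{eq:sq_puncture} as a corollary of \eqref{eq:sum}. Both are routine.
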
 

\begin{proof}
By definition of the square codes, one proves that 
$
(\code{A}+\code{B})^2 = \code{A}^2 + \code{B}^2 + \code{A} \star \code{B}.
$ This leads to (\ref{eq:sum}). Let $\code{A}(\Ind)$ be the code
of dimension $|\Ind|$ and length $n$ composed by all words of $\fq^n$
supported by $ \Ind$ and ${\rm Ext}(\pu{\Ind}{\CC})$ be the code
$\pu{\Ind}{\CC}$ extended by zero to get a code of length $n$. We have:
$$
\code{C} \subseteq {\rm Ext}(\pu{\Ind}{\CC}) \oplus \code{A}(\Ind),
$$
then, thanks to (\ref{eq:sum}) and since
$\stp{{\rm Ext}(\pu{\Ind}{\CC})}{\code{A}(\Ind)}=\{0\}$ and $\dim \sqc{\code{A}(\Ind)} =
|\Ind|$, we get (\ref{eq:sq_puncture}).
\end{proof}

\begin{proof}[Proof of Proposition \ref{prop:upper_bound}]
We start by using Lemma \ref{lem:inclusion_punctured_shortened}
with $\Ind  = \Ind_1$. We get that 
there exist vectors $\av,\uv,\vv$ in  $\fq^{n-|\Ind|-|\Jind_2|}$ such that 
\begin{equation}
\label{eq:1}
\pu{\Jind_2}{\sh{\Ind_1}{\Cpub^\perp}} \subseteq \code{E} + <\av>,
\end{equation}
where $\code{E}$ is some subcode of $\GRS{n-k-|\Ind_1|}{\uv}{\vv}$ of codimension $1$ (see (\ref{eq:codeE})). 
From \eqref{eq:sum} 
with $\code{A} = <\av>$ and $\code{B}=\code{E}$ and
Proposition \ref{prop:square}, we deduce that
\begin{eqnarray}
\dim \sqcp{\pu{\Jind_2}{\sh{\Ind_1}{\Cpub^\perp}}} &\leq & 1 + \dim \sqcp{\code{E}} + \dim \code{E}\\
& \leq & 1 + \dim \sqcp{\GRS{n-k-|\Ind_1|}{\uv}{\vv}} +(n-k-|\Ind_1|-1)\\
& \leq & 1 + 2(n-k-|\Ind_1|)-1 + (n-k-|\Ind_1|-1)\\
& = & 3(n-k)-3|\Ind_1|-1.
\end{eqnarray}
We finish by using \eqref{eq:sq_puncture} and obtain that
\begin{eqnarray*}
\dim \sqcp{\sh{\Ind_1}{\Cpub^\perp}} & \leq & \dim \sqcp{\pu{\Jind_2}{\sh{\Ind_1}{\Cpub^\perp}}} + |\Jind_2| \\
& \leq & 3(n-k)-3|\Ind_1|-1 + |\Jind_2|.
\end{eqnarray*}
\end{proof}

\section{Explanations for the upper bound \eqref{eq:upper_bound} in the case where $\Ind$ contains
positions of degree $2$}\label{sec:heuristic}

\subsection{A general upper-bound on the dimension of $\sqcp{\sh{\Ind}{\Cpub^\perp}}$}

It will be useful to have a slight variant of Proposition \ref{prop:upper_bound} which holds for any subset $\Ind$ of code positions and which is
 given by
\begin{proposition}\label{prop:general_upper_bound}
Let $\Ind$ b a set of code positions.
Let $\CC \eqdef \sh{\Ind}{\Csec^{\perp} \Tm^T}$ and ${\rm Ext}(\CC)$  be the code obtained from $\CC$ by extending 
by zero at positions which were shortened.
 If ${\rm Ext}(\CC) \nsubseteq \left(\Csec(\lambda)^{\
    \perp} \right) \Tm^{T}$
and $\Tm$ is invertible,
we have
\begin{equation}
\label{eq:general_upper_bound}
\dim \sqcp{\sh{\Ind}{\Cpub^\perp}} \leq \dim \sqc{\CC} + \dim  \CC.
\end{equation}
\end{proposition}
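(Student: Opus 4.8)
The plan is to sandwich $\sh{\Ind}{\Cpub^\perp}$ between the codimension-$1$ subcode
$$
\code{D}_\lambda \eqdef \Csec^\perp(\lambda)\,\Tm^T,
\qquad \Csec^\perp(\lambda) \eqdef \Csec^\perp \cap \langle\lambda\rangle^\perp
$$
(the code already used in Lemma \ref{lem:puncture_2}) and the code $\CC$, and then to feed this into the subadditivity inequality \eqref{eq:sum} of Lemma \ref{lem:super_easy}.

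First I would check that $\code{D}_\lambda$ sits, with codimension exactly $1$, inside both $\Cpub^\perp$ and $\code{D}$. On the one hand, $\Csec^\perp(\lambda)\Rm^T = 0$, so $\code{D}_\lambda = \Csec^\perp(\lambda)(\Tm^T+\Rm^T) \subseteq \Csec^\perp(\Tm^T+\Rm^T) = \Cpub^\perp$, exactly as in the proof of Lemma \ref{lem:inclusion_punctured_shortened}; on the other hand $\code{D}_\lambda = \Csec^\perp(\lambda)\Tm^T \subseteq \Csec^\perp\Tm^T = \code{D}$. Since $\Qm = \Tm+\Rm$ and $\Tm$ are both invertible, right multiplication by $\Qm^T$ and by $\Tm^T$ are injective, whence $\dim\code{D}_\lambda = \dim\Csec^\perp(\lambda) = (n-k)-1$ while $\dim\Cpub^\perp = \dim\code{D} = n-k$; this gives the two codimension-$1$ claims. (The dimension of $\Csec^\perp(\lambda)$ could only fail to be $(n-k)-1$ if $\lambda\in\Csec$, but then $\Csec^\perp\Rm^T = 0$, hence $\Cpub^\perp = \code{D}$ and ${\rm Ext}(\CC)\subseteq\code{D} = \code{D}_\lambda$, contradicting the hypothesis; so this degenerate case does not arise.)

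Next I would transport these inclusions through the shortening at $\Ind$. Shortening a codimension-$1$ subcode produces a codimension-$\leq 1$ subcode, so $\sh{\Ind}{\code{D}_\lambda}\subseteq\sh{\Ind}{\Cpub^\perp}$ with $\dim\sh{\Ind}{\Cpub^\perp}\leq\dim\sh{\Ind}{\code{D}_\lambda}+1$, and similarly $\sh{\Ind}{\code{D}_\lambda}\subseteq\sh{\Ind}{\code{D}}=\CC$. This last inclusion is where the hypothesis ${\rm Ext}(\CC)\nsubseteq\Csec^\perp(\lambda)\Tm^T$ enters: it says that $\code{D}$ contains, among the words vanishing on $\Ind$, one that is not in $\code{D}_\lambda$, i.e. $\sh{\Ind}{\code{D}_\lambda}$ is a \emph{proper} subcode of $\CC$; together with the codimension-$\leq 1$ bound this forces $\dim\sh{\Ind}{\code{D}_\lambda} = \dim\CC - 1$. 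Now I would write $\sh{\Ind}{\Cpub^\perp} = \sh{\Ind}{\code{D}_\lambda} + \langle\bv\rangle$ for a single vector $\bv$ (possibly $0$), and apply \eqref{eq:sum} with $\code{A} = \sh{\Ind}{\code{D}_\lambda}$ and $\code{B} = \langle\bv\rangle$; using $\dim\sqc{\langle\bv\rangle}\leq 1$ and $\dim\big(\sh{\Ind}{\code{D}_\lambda}\cwp\langle\bv\rangle\big)\leq\dim\sh{\Ind}{\code{D}_\lambda}$, this gives
$$
\dim\sqcp{\sh{\Ind}{\Cpub^\perp}} \leq \dim\sqcp{\sh{\Ind}{\code{D}_\lambda}} + 1 + \dim\sh{\Ind}{\code{D}_\lambda}.
$$
Finally, $\sh{\Ind}{\code{D}_\lambda}\subseteq\CC$ yields $\dim\sqcp{\sh{\Ind}{\code{D}_\lambda}}\leq\dim\sqc{\CC}$ by monotonicity of the square, and $\dim\sh{\Ind}{\code{D}_\lambda} = \dim\CC-1$; substituting gives exactly $\dim\sqcp{\sh{\Ind}{\Cpub^\perp}}\leq\dim\sqc{\CC}+\dim\CC$.

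The only delicate point is the bookkeeping of the additive constant: inequality \eqref{eq:sum} unavoidably loses a $+1$, and the sole role of the hypothesis ${\rm Ext}(\CC)\nsubseteq\Csec^\perp(\lambda)\Tm^T$ is to buy that $1$ back through the identity $\dim\sh{\Ind}{\code{D}_\lambda} = \dim\CC-1$. Without it one would only obtain the weaker bound $\dim\sqc{\CC}+\dim\CC+1$.
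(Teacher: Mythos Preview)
Your proof is correct and follows essentially the same route as the paper's: both isolate the codimension-$1$ subcode $\Csec^\perp(\lambda)\Tm^T$ inside $\Cpub^\perp$, shorten, apply the subadditivity inequality \eqref{eq:sum} with the extra one-dimensional piece, and then use the hypothesis to recover the lost $+1$ via $\dim\sh{\Ind}{\code{D}_\lambda}=\dim\CC-1$. The only cosmetic difference is that the paper treats the case $\lambda\in\Csec$ separately (where $\Cpub^\perp=\code{D}$ and the bound is trivial), whereas you observe that this case is in fact excluded by the hypothesis ${\rm Ext}(\CC)\nsubseteq\Csec^\perp(\lambda)\Tm^T$; both treatments are fine.
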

\begin{proof}
Recall that, from Lemma \ref{lem:basics}, we have
$$
\Cpub^\perp = \Csec^{\perp}(\Tm^T + \Rm^T)
$$
and that
$\Rm^T = \lambda^T \mu$.
Two cases have to be considered now.

\noindent
Case 1: $\lambda \in  \Csec$. This implies that 
$\Csec^\perp \subset <\lambda>^\perp$ and therefore we have that
\begin{eqnarray*}
\Cpub^{\perp} & = & \Csec^\perp(\Tm^T+ \lambda^T \mu)\\
& = & \Csec^\perp \Tm^T.
\end{eqnarray*}
In such a case we have
$
\sh{\Ind}{\Cpub^\perp} = \CC
$
and the upper-bound follows immediately.

\noindent
Case 2: $\lambda \notin  \Csec$.
In this case there exists $\av \in \fq^n$ such that
$$
\Csec^\perp = <\av> + \Csec(\lambda)^\perp.
$$
From this we deduce
\begin{eqnarray*}
\Cpub^\perp & = & \left(<\av> + \Csec(\lambda)^\perp \right)(\Tm^T+ \lambda^T \mu)\\
& = & <\bv> + \Csec(\lambda)^\perp  \Tm^T,
\end{eqnarray*}
where $\bv= \av (\Tm^T+ \lambda^T \mu)$.
This implies that there exists $\cv \in \fq^{n-|\Ind|}$ such that
$$
\sh{\Ind}{\Cpub^\perp} \subseteq <\cv> + \sh{\Ind}{\Csec(\lambda)^\perp \Tm^T}.
$$
Then, we use the upper bound \eqref{eq:sum} of Lemma \ref{lem:super_easy} 
with $\code{A} = <\cv>$ and $\code{B}=\sh{\Ind}{\Csec(\lambda)^\perp \Tm^T}$ and obtain
$$
\dim \sqc{\sh{\Ind}{\Cpub^\perp}} \leq  1 + \dim \sqc{\code{B}} + \dim \code{B}.
$$
We finish the proof by noticing that 
\begin{eqnarray*}
\sqc{\code{B}} &=& \sqcp{\sh{\Ind}{\Csec(\lambda)^\perp \Tm^T}}\\
&\subseteq & \sqcp{\sh{\Ind}{\Csec^\perp  \Qm^T}}=\sqc{\CC}.
\end{eqnarray*}
Moreover, 
$\dim \code{B} \leq  \dim \CC -1$
when $\lambda \notin  \Csec$. Indeed,
since $\Tm$ is assumed to be invertible and, by assumption,
$\lambda \notin \Csec$, the code
$\Csec(\lambda)^{\perp}\Tm^T$ has codimension $1$
in $\Csec^{\perp}\Tm^T$.
Second, notice that the code $\code{B}$ extended by zero equals
$ \Csec(\lambda)^{\perp}\Tm^T  \cap {\rm Ext}(\CC)$.
Next, since  by assumption 
${\rm Ext}(\CC) \nsubseteq \Csec(\lambda)^{\perp} \Tm^{T}$,
we get that $\Csec(\lambda)^{\perp}\Tm^T  \cap {\rm Ext}(\CC)$
has codimension $1$ in $\Csec^{\perp}\Tm^T \cap {\rm Ext}(\CC) = {\rm Ext}(\CC)$.
Therefore $\dim \code{B} = \dim \code{C} - 1$.
\end{proof}

\subsection{The graph associated to a sparsely mixed GRS code}
\label{sec:graph_smGRS}
Proposition \ref{prop:general_upper_bound} raises the issue of understanding the structure
of shortenings of codes of the form $\sh{\Ind}{\CC \Tm^T}$\footnote{To keep the connection with the BBCRS scheme and the meaning 
of $\Tm$ in the BBCRS scheme we keep the transpose throughout the paper.} where 
$\CC$ is a GRS code, $\Tm$ is a sparse square matrix and $\Ind$ is a subset of code positions of $\CC$. We denote codes of this kind {\em shortened sparsely mixed GRS codes}. 
We will represent them by their defining triple $(\CC,\Tm,\Ind)$. A colored graph associated to the pair $(\Tm,\Ind)$ will be very useful for studying such codes.
It is defined as follows.

\begin{definition}[graph associated to a shortened sparsely mixed GRS code $(\CC,\Tm,\Ind)$]
The graph associated to the shortened sparsely mixed GRS code $(\CC,\Tm,\Ind)$ is the bipartite graph
$G(U\cup V,E)$ given by
\begin{itemize}
\item set of vertices $U \cup V$, with a bijection from $U$ to the set of columns of 
$\Tm$ and where $V$ is in bijection with the rows of $\Tm$.
\item edge set $E$ where $uv$ is an edge of $E$ if and only $T_{vu} \neq 0$.
\end{itemize}
All the vertices are colored in black with the exception of the vertices of $V$ which belong to $\Ind$, in such 
a case they are colored in red and the vertices of $V$ which are of degree $1$ but which do not belong to $\Ind$ are colored in blue.
\end{definition} 

\begin{remark}
 Notice that for the graph associated to the triple $(\Csec^\perp,\Tm,\Ind)$ 
(where $\Ind$ is arbitrary) corresponding to a BBCRS scheme, the positions of degree $2$ in the BBCRS scheme correspond precisely to the vertices of $V$ of degree $2$ whereas
the positions of degree $1$ in the BBCRS scheme correspond to  the vertices of $V$ of degree $1$.
\end{remark}

This graph (without the coloring) is a way of representing a code of the form $\GRS{n-k}{\xv}{\yv}\Tm^T$.
Consider a codeword $(c_1,\dots,c_n)$ of such a code. Clearly there exists a polynomial in 
$\fq[X]$ of degree $< n-k$ such that for any $v \in \{1,\dots,n\}$ we have
$$
c_v = \sum_{u \sim v} y_{uv}P(x_u)
$$
where $y_{uv} \eqdef y_u T_{vu}$ and the sum is taken over all vertices $u$ of $U$ that are adjacent to $v$. To understand the effect of shortening the code in a set of positions, consider a codeword
$(c_v)_{v \in V \setminus \Ind}$ of the shortened code. The coordinates of such a codeword are given by the same formula as before, i.e. 
$c_v = \sum_{u \sim v} y_{uv}P(x_u)$ with $P$ a polynomial of degree $< n-k$, but now this polynomial should satisfy all the equations
$$
 \sum_{u \sim i} y_{uv}P(x_u) = 0
$$ for all the $i$'s that belong to $\Ind$ (and which are therefore colored in red).

 From now on and for the rest of the section we will assume that
 \begin{hypothesis}
  The set of positions of degree $1$
of the sparsely mixed GRS code $\GRS{n-k}{\xv}{\yv}\Tm^T$ contains an information set of the code and there
are no positions of degree $>2$.
\end{hypothesis}
The effect of shortening on the dimension of $\dim \CC^2$ when $\CC = \sh{\Ind}{\GRS{n-k}{\xv}{\yv}\Tm^T}$ is better understood when we study first some extremal cases 
\begin{enumerate}
\item $\Ind$ contains only positions of degree $1$;
\item $\Ind$ is reduced to a single degree $2$ position.
\end{enumerate}

{\bf Note}: Recall that $\Jind_1, \Jind_2$ denote respectively
the sets of degree $1$ and $2$ positions associated to the sparsely
mixed code $\GRS{n-k}{\xv}{\yv}\Tm^T$ and $\CC$ denotes the shortened
code $\sh{\Ind}{\GRS{n-k}{\xv}{\yv}\Tm^T}$.

{\em Shortening with respect to positions of degree $1$. } 
In such a case, by using  Inequality \eqref{eq:sq_puncture} from Lemma \ref{lem:super_easy} with $\Ind=\Jind_2$, i.e. we expect that 
$$
\dim \sqc{\CC} \leq \sqc{\pu{\Jind_2}{\CC}} + |\Jind_2|.
$$
The point with this puncturing is that $\pu{\Jind_2}{\CC}=\pu{\Jind_2}{\sh{\Ind}{\GRS{n-k}{\xv}{\yv}}}$ is a GRS code of dimension $n-k-|\Ind|$ and we can apply Proposition \ref{prop:square}  to obtain that 
$\dim \sqc{\pu{\Jind_2}{\CC}} = 2(n-k-|\Ind|)-1$. This yields the upper-bound
\begin{equation}\label{eq:degree1}
\dim \sqc{\CC} \leq  2(n-k)-1-2|\Ind|+|\Jind_2|
\end{equation}
It turns out that this upper-bound is typically met.

{\em Shortening with respect to a single position $i$ of degree $2$: $\Ind=\{i\}$.} We would typically expect that in such a case  to have
$$
\dim \sqc{\CC} \leq  2(n-k)-2+|\Jind_2|
$$
since $|\Jind_2|$ drops by $1$.

It turns out that a stronger inequality holds in this case, the point being that shortening in a degree $2$ position $i$ 
yields codewords of the form $(\sum_{u \sim v} y_{uv} P(x_u))_{v \neq i}$ where $P$ is a polynomial 
of degree $< n-k$ which is such that $y_{j_1i} P(x_{j_1}) + y_{j_2i} P(x_{j_2})=0$ where 
the two vertices of $U$ adjacent to $i$ are $j_1$ and $j_2$ respectively. In such a case it turns out that
\begin{proposition}
\label{prop:degree2}
Assume that $k$ is an integer in the range $[\frac{n-1}{2},n-4]$.
Let $a,b,\lambda$  be three elements in $\fq$ with $a \neq b$, $\lambda \neq 0$ and 
let $\code{F}$ be the subcode of $\GRS{n-k}{\xv}{\yv}$ given by
$$
\code{F} \eqdef \left\{(y_i P(x_i))_{1 \leq i \leq n}|\deg P < n-k, P(a)=\lambda P(b) \right\}
$$
Then $\sqc{\code{F}}$ is a subcode of codimension $1$ of the GRS code
$\GRS{2(n-k)-1}{\xv}{\yv \cwp \yv}$ given by
$$
\sqc{\code{F}} = \left\{(y_i^2 P(x_i))_{1 \leq i \leq n}|\deg P < 2(n-k)-1, P(a)=\lambda^2 P(b) \right\}
$$
\end{proposition}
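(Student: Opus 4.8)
The plan is to transfer the whole statement to the level of polynomials. Set $d \eqdef n-k$; the hypothesis $k\in[\frac{n-1}{2},\,n-4]$ gives $d\ge 4$ and $2d-1\le n$. Write $\mathrm{ev}(P) \eqdef (y_1P(x_1),\dots,y_nP(x_n))$ and $\mathrm{ev}_2(R) \eqdef (y_1^2R(x_1),\dots,y_n^2R(x_n))$, denote by $\fq[X]_{<m}$ the polynomials of degree $<m$, and put
\[
W \eqdef \{\,P\in\fq[X]_{<d} : P(a)=\lambda P(b)\,\},\qquad
V \eqdef \{\,R\in\fq[X]_{<2d-1} : R(a)=\lambda^2 R(b)\,\}.
\]
Then $\code{F}=\mathrm{ev}(W)$ and the code claimed to equal $\sqc{\code{F}}$ is $\mathrm{ev}_2(V)$. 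Since $2d-2<n$ and the $x_i$ are distinct with the $y_i$ nonzero, $\mathrm{ev}_2$ is injective on $\fq[X]_{<2d-1}$ (and $\mathrm{ev}$ on $\fq[X]_{<d}$), so it suffices to work with the spaces $W$, $V$ and the linear span $W\cdot W\subseteq\fq[X]$ of the products $PQ$ with $P,Q\in W$; indeed $\mathrm{ev}(P)\cwp\mathrm{ev}(Q)=\mathrm{ev}_2(PQ)$ gives $\sqc{\code{F}}=\mathrm{ev}_2(W\cdot W)$.

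First I would dispatch the easy half. For $P,Q\in W$ the product $R\eqdef PQ$ has degree $<2d-1$ and $R(a)=P(a)Q(a)=\lambda^2P(b)Q(b)=\lambda^2R(b)$, so $W\cdot W\subseteq V$ and $\sqc{\code{F}}\subseteq\mathrm{ev}_2(V)$. The linear form $R\mapsto R(a)-\lambda^2R(b)$ is nonzero on $\fq[X]_{<2d-1}$ (it is nonzero on the degree $\le 1$ polynomial equal to $1$ at $a$ and $0$ at $b$), so $\dim V=2d-2$; hence $\mathrm{ev}_2(V)$ has codimension $1$ in $\mathrm{ev}_2(\fq[X]_{<2d-1})=\GRS{2d-1}{\xv}{\yv\cwp\yv}$.

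The substance is to prove $\dim(W\cdot W)=2d-2$, for then the inclusion $W\cdot W\subseteq V$ between spaces of equal dimension forces $W\cdot W=V$, which is exactly the description in the statement. For this I would set $N(X)\eqdef(X-a)(X-b)$ and let $\mu_0$ be the polynomial of degree $\le 1$ with $\mu_0(a)=\lambda$, $\mu_0(b)=1$, and use interpolation at $\{a,b\}$ to obtain the direct sum decomposition $W=\fq\,\mu_0\ \oplus\ N\,\fq[X]_{<d-2}$ (every $P\in W$ equals $P(b)\mu_0+Ng$ with $\deg g<d-2$, uniquely because $\mu_0(b)\ne0$). Expanding a product of two such elements, and using that products of polynomials of degree $<d-2$ span all polynomials of degree $<2d-5$, gives
\[
W\cdot W \;=\; \fq\,\mu_0^2 \;+\; N\mu_0\,\fq[X]_{<d-2} \;+\; N^2\,\fq[X]_{<2d-5}.
\]
Since $\mu_0(a)=\lambda\ne0$ and $\mu_0(b)=1\ne0$, $\mu_0$ is coprime to $N$; a short divisibility computation then identifies the intersection of the last two summands with $N^2\mu_0\,\fq[X]_{<d-4}$ (this is where $d\ge4$, i.e. $k\le n-4$, is used), so that the sum of those two summands has dimension $(2d-5)+(d-2)-(d-4)=2d-3$. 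Every element of that sum is divisible by $N$ whereas $\mu_0^2$ is not, so $\fq\,\mu_0^2$ meets it only in $0$ and $\dim(W\cdot W)=2d-2$. (If $\lambda=1$ then $\mu_0\equiv1$ and the same computation goes through with the constant $1$ in place of $\mu_0$.)

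The main obstacle is this last dimension count: because the conclusion is an equality of codes, $\dim(W\cdot W)$ has to be pinned down exactly, not merely bounded. The two delicate points are (i) showing that $W\cdot W$ is precisely the sum of the three displayed subspaces --- the inclusion ``$\subseteq$'' is immediate from the decomposition of $W$, but ``$\supseteq$'' needs each summand to genuinely arise as a linear combination of products $PQ$ --- and (ii) the coprimality and degree-truncation bookkeeping in computing the intersections, which is elementary but must be done carefully. Everything else (injectivity of the evaluation maps, the inclusion $\sqc{\code{F}}\subseteq\mathrm{ev}_2(V)$, and the codimension claim) is routine once $2d-1\le n$.
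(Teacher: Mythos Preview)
Your proof is correct. The reduction to polynomials via injectivity of $\mathrm{ev}_2$ on $\fq[X]_{<2d-1}$ (using $2d-1\le n$), the decomposition $W=\fq\,\mu_0\oplus N\,\fq[X]_{<d-2}$, the identification $W\cdot W=\fq\,\mu_0^2+N\mu_0\,\fq[X]_{<d-2}+N^2\,\fq[X]_{<2d-5}$, and the intersection count $B\cap C=N^2\mu_0\,\fq[X]_{<d-4}$ (which is exactly where $d\ge4$ enters) all check out, and the final tally $\dim(W\cdot W)=1+(2d-3)=2d-2$ forces $W\cdot W=V$ as claimed.

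The paper argues the same inclusion $\sqc{\code F}\subseteq V$ but establishes the lower bound $\dim\sqc{\code F}\ge 2d-2$ differently: it takes the subcode $\code F_0\subseteq\code F$ of polynomials vanishing at both $a$ and $b$ (your $N\,\fq[X]_{<d-2}$), observes that $\sqc{\code F_0}$ already has dimension $2d-5$, and then exhibits three further elements $P^2,\,PQ_a,\,PQ_b\in\sqc{\code F}$ (with $P\in\code F$ nonvanishing at $a,b$ and $Q_a,Q_b\in\code F_0$ having prescribed vanishing orders) that are linearly independent modulo $\sqc{\code F_0}$. So the paper reaches $2d-2$ by $\,(2d-5)+3$, whereas you reach it by $\,(2d-5)+(d-2)-(d-4)+1$. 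The underlying structural idea is the same splitting of $W$ into its ``vanishing at $a,b$'' part and a one-dimensional complement; the paper's execution is slightly shorter because it only needs to produce three explicit witnesses and check their independence via local vanishing orders, while your approach is more systematic (it actually identifies $W\cdot W$ as an explicit sum of subspaces rather than just bounding its dimension) and avoids having to choose auxiliary polynomials by hand.
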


\begin{proof}
$\sqc{\code{F}}$ is generated by codewords of the form 
$(y_i ^2P(x_i)Q(x_i))_{1 \leq i \leq n}$ with 
\begin{eqnarray*}
\deg P & < & n-k\\
\deg Q & < & n-k\\
P(a) & = & \lambda P(b)\\
Q(a) & = & \lambda Q(b).
\end{eqnarray*}
In other words, we proved that
$$
\sqc{\code F} \subseteq \left\{(y_i^2 P(x_i))_{1 \leq i \leq n}|\deg P < 2(n-k)-1, P(a)=\lambda^2 P(b) \right\}
$$
and to conclude the proof, we only need to prove that $\dim \sqc{\code{F}}
\geq 2(n-k)-2$.

Let $\code{F}_0 \subseteq \code{F}$ be defined as
$$
\code{F}_0 \eqdef \left\{ {(y_iP(x_i))}_{1\leq i\leq n}|
\deg P <n-k,\ P(a)=P(b)=0\right\}.
$$
and let $P, Q_a, Q_b$ be $3$ polynomials of degree $<n-k$ such that 
$P(a) = \lambda P(b)\neq 0$, $Q_a$ vanishes with multiplicity $2$ at $a$ 
and $1$ at $b$ and $Q_b$ vanishes with multiplicity $1$ at $a$ and $2$ at $b$.
The existence of such nonzero polynomials is guaranteed since we assumed
$k$ to be $\leq n-4$ and hence $n-k \geq 4$.
For the very same reason, $\code{F}_0$ is nonzero.
The code $\code{F}_0$ is a GRS code whose square is the subcode of
$\GRS{2(n-k)-1}{\xv}{\yv}$ of codimension $4$ corresponding to polynomials
vanishing at $a$ and $b$ with multiplicity $2$ at both points. 
Thus, $\dim \sqc{\code{F}_0} = 2(n-k)-5$.
Next, one sees easily that
$$
\sqc{\code{F}}_0 \oplus <PQ_a> \oplus <PQ_b> \oplus <P^2> \subseteq \sqc{\code{F}},
$$
and hence $\sqc{\code{F}}$ has dimension at least $2(n-k)-2$. 
This concludes the proof.
\end{proof}

By applying this proposition to $\pu{\Jind_2 \setminus \{i\}}{\CC}$, we easily obtain the stronger inequality
$$
\dim \sqc{\CC} \leq  2(n-k)-3+|\Jind_2|,
$$
which is nothing but \eqref{eq:degree1} with $|\Ind| = 1$.

Generalizing this idea to general subsets $\Ind$ of code positions we expect that \eqref{eq:degree1} holds in all cases. However it turns out that we have in general
a stronger inequality. This is due to the fact that shortening  has sometimes long range effects. This is a point that we study in more detail in the following subsection.

\subsection{Shortening induces merging and pruning of the graph.}\label{ss:shortening_merging_pruning}
We first start with a few examples which will help to understand the underlying phenomena.

\begin{example}\label{ex:ex1}
Let us shorten in a single position, i.e. $\Ind=\{v\}$ with $v$ of degree $1$ adjacent to $u$ and we assume that there exists $v'$ of degree $2$ adjacent to $u$ and $u'$
(see Figure \ref{fig:ex1}).
\begin{figure}[h!]
\caption{A first example \label{fig:ex1}}
\centering
\includegraphics[height=2.5cm]{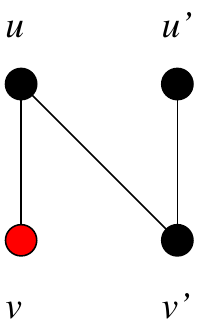}
\end{figure}
When we shorten $\GRS{n-k}{\xv}{\yv}\Tm^T$ in $v$ this means that we consider only codewords of the form
$(\sum_{u \sim w} y_{uw} P(x_w))_{w \neq v}$ where $\deg P < n-k$ and $P(x_u)=0$. Notice now that because of this,
the code position $v'$ simplifies from $y_{uv'} P(x_u) + y_{u'v'} P(x_{u'})$ to $y_{u'v'} P(x_{u'})$. In this sense, 
the shortened sparsely mixed GRS code $\sh{\{v\}}{\GRS{n-k}{\xv}{\yv} \Tm^T}$ corresponds to a subcode of the
sparsely mixed GRS code associated to a simplified graph.
It is obtained from the  graph associated to the shortened sparsely mixed code by removing
$u$ and $v$ and the incident edges. Moreover its  codewords $(\sum_{u \sim w} y_{uw} P(x_u))_{w \in V \setminus \{v\}}$
correspond to polynomials satisfying an additional condition, namely $P(x_u)=0$. 
\begin{figure}[h!]
\caption{The simplified graph corresponding to the first example \label{fig:ex1s}}
\centering
\includegraphics[height=2.5cm]{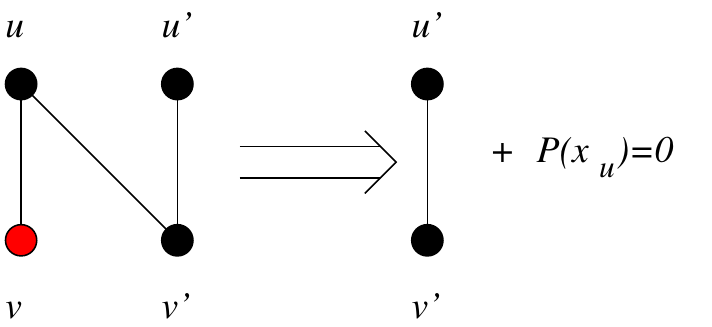}
\end{figure}
\end{example}

 In the new code, a degree $2$ vertex disappeared and we therefore expect that the dimension of the square of the shortened sparsely mixed code is equal to
 \begin{eqnarray*}
 \dim\sqcp{\sh{\{v\}}{\GRS{n-k}{\xv}{\yv} \Tm^T}}& =& \dim \sqc{\pu{\Jind_2 \setminus \{v\}}{\sh{\{v\}}{\GRS{n-k}{\xv}{\yv} \Tm^T}}} + |\Jind_2 \setminus \{v\}|\\
 & = &  2(n-k)-1 -2 + |\Jind_2| -1= 2(n-k)-4+|\Jind_2|.
 \end{eqnarray*}

 In other words, we have to take into account that the effect of shortening may have deeper effects than just the sum of the effects of the shortening of degree $1$ positions
 and degree $2$ positions which decreases the dimension by a term which is $|\Ind|$. As shown by this example, shortening might remove some other degree $2$ positions which were not 
 shortened and which could be transformed into a degree $1$ position as is apparent from this example. We therefore expect that the effect of shortening in a set $\Ind$ leads  to a dimension 
 for $\sqc{\CC}$ which is of the form
 $$
 \dim \CC = 2(n-k)-1 - |\Ind| + |\Jind'_2|
 $$
 where $\Jind'_2$ is the set of code positions of degree $2$ which remain after we take into account the effect of the shortening.
Next, it turns out that we have to take into account a slightly more complicated
 phenomenon coming from the effect of the shortening of degree $2$ positions.
This is illustrated by the next example.
 
 \begin{example}
 Let us consider now an example where we shorten in two positions $v$ and $v'$ whose neighborhood is specified by Figure 
 \ref{fig:ex2}. 
 \begin{figure}[h!]
\caption{A second example \label{fig:ex2}}
\centering
\includegraphics[height=2.5cm]{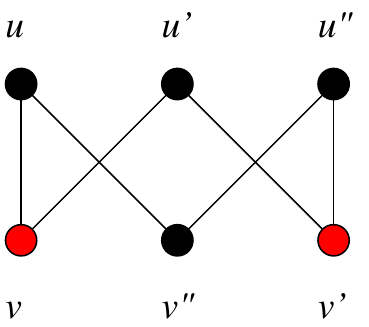}
\end{figure}
 \end{example}
 A codeword of the shortened code $\sh{\{v,v'\}}{\GRS{n-k}{\xv}{\yv}\Tm^T}$ is of the form $(\sum_{u \sim w} y_{uw} P(x_w))_{w \in V \setminus \{v,v'\}}$ where $P$ should satisfy  at the same time
 \begin{eqnarray*}
 \deg P & < & n-k\\
 y_{uv} P(x_u) + y_{u'v} P(x_{u'}) & = & 0\\
  y_{u'v'} P(x_{u'}) + y_{u"v'} P(x_{u"}) & = & 0.
 \end{eqnarray*}
 Because of these relations, the codeword position $c_{v"}$ which has not been shortened is of the form
 $$
 c_{v"} = y_{uv"} P(x_{u}) + y_{u"v"} P(x_{u"})=\alpha P(x_u)
 $$
 for some $\alpha$ that depends on the $y_{uv}$'s. In other words, the codeword position $v"$ becomes a position of degree $1$ after shortening and
 it makes sense to merge the nodes $u,u'$ and $u"$ to represent the fact that we have linear relations between $P(x_u),P(x_{u'})$ and $P(x_{u"})$, 
 see Figure \ref{fig:ex2s}. 
 \begin{figure}[h!]
\caption{The second example revisited \label{fig:ex2s}}
\centering
\includegraphics[height=2.5cm]{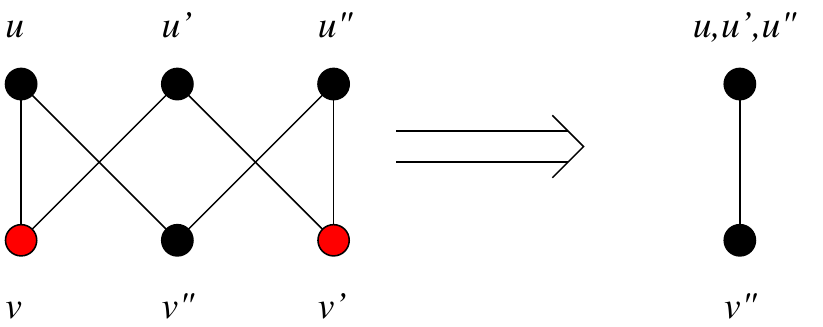}
\end{figure}

The effect on the dimension can be understood by using Inequality \eqref{eq:sq_puncture} of Lemma \ref{lem:super_easy}   and we see that we should have
\begin{eqnarray*}
 \dim\sqcp{\sh{\{v,v'\}}{\GRS{n-k}{\xv}{\yv} \Tm^T}} & \leq & \dim \sqcp{\pu{\Jind_2 \setminus \{v,v',v"\}}{\sh{\{v,v'\}}{\GRS{n-k}{\xv}{\yv} \Tm^T}}} + |\Jind_2 \setminus \{v,v',v"\} |\\
 & \leq & 2(n-k)-1 -2 + |\Jind_2| -3= 2(n-k)-6+|\Jind_2|,
 \end{eqnarray*}
 where $\Jind_2$ is the set of degree $2$ positions of the sparsely mixed code $\GRS{n-k}{\xv}{\yv} \Tm^T$.
 The $-2$ which follows the term $2(n-k)-1$ is due to the fact that the code 
 $\code{G} \eqdef \sh{\{v,v'\}}{\GRS{n-k}{\xv}{\yv} \Tm^T}$ is a code which satisfies
 $$
 \code{G} =  \left\{(y_{u(w)w} P(x_{u(w)})_{w \in V\setminus\{v,v'\}}|\deg P < n-k, P(x_u)=\alpha P(x_{u'})=\beta P(x_{u"}) \right\}
 $$
 where $u(w)$ is the unique vertex of $U$ adjacent to $w$, $\alpha$ and $\beta$ are nonzero elements of $\fq$. 
 A generalization of Proposition \ref{prop:square} leads immediately to
 $$
 \dim \sqcp{\code{D}} \leq 2(n-k)-1-2.
 $$
 
 In other words, we can quantify the effect of the shortening of degree $2$ positions by merging the vertices of $U$ which are linked to a same
 vertex of $V$ which is shortened. If we obtain a vertex which corresponds to the merging of $d$ vertices then this induces a drop in dimension of 
 $d-1$ (this corresponds to a generalization of Proposition \ref{prop:square}).
 
 All these considerations lead to introduce the following algorithm that formalizes these considerations.
 
\begin{algorithm}[h!]
Algorithm for reducing the graph after shortening.
\begin{algorithmic}
\STATE{}\COMMENT{Merge phase}
\FORALL{red nodes $v$ of degree $2$}
\STATE{Remove $v$ and the two edges $u_1 v$ and $u_2v$ incident to it.}
\STATE{Merge $u_1$ and $u_2$.}
\ENDFOR
\STATE{}\COMMENT{Pruning phase}
\WHILE{there is a red node $v$ in $V$ of degree $1$ adjacent to a black node $u$ in $U$}
\STATE{Remove $v$.}
\IF{there exists a black node $v'$ adjacent to $u$ which is of degree $1$}
\STATE{Remove $v'$ and its incident edge.}
\ENDIF
\STATE{Remove $u$ and all the edges adjacent to $u$.}
\ENDWHILE

\end{algorithmic}
\end{algorithm}

With the help of this algorithm we can bring in the crucial quantities which govern the dimension of $\sqc{\CC}$ and, from Proposition \ref{prop:general_upper_bound}, also 
$\sqcp{\sh{\Ind}{\Cpub^\perp}}$
\begin{itemize}
\item the set ${\cal M}$ of merged nodes in the graph which did not disappear during the pruning process.
\item the degree $d(x)$ of such a merged node $x$ is defined as the number of vertices of $U$ that have been merged
together to yield this node $x$.
\item the remaining set $\Jind_2'$ of degree $2$ nodes of $V$ after merging and pruning.
\item the set $\Ind_1$ of degree $1$ nodes of $V$ in the original graph that have disappeared during the process.
\end{itemize}

The dimension of $\dim \sqc{\CC}$ is typically given by
$$
\dim \sqc{\CC} = 2(n-k) -1 -2|\Ind_1| + |\Jind'_2| -\sum_{x \in {\cal M}} (d(x)-1)
$$
and from Proposition \ref{prop:general_upper_bound}  (whose upper-bound is actually generally met) the dimension of $\dim \sqcp{\sh{\Ind}{\Cpub^\perp}}$ is typically given by
\begin{equation}\label{eq:du_flan?}
\dim \sqcp{\sh{\Ind}{\Cpub^\perp}} = 3(n-k) - 1- 2|\Ind_1| + |\Jind'_2|-|\Ind|-\sum_{x \in {\cal M}} (d(x)-1).
\end{equation}

\subsection{An example}
 We give in Figure \ref{fig:example}  an example of a graph associated to a shortened sparsely mixed GRS code of length $10$ where we shortened $4$ positions.
\begin{figure}[h!]
\caption{An example of a graph associated to a shortened sparsely mixed GRS code \label{fig:example}}
\centering
\includegraphics[height=4cm]{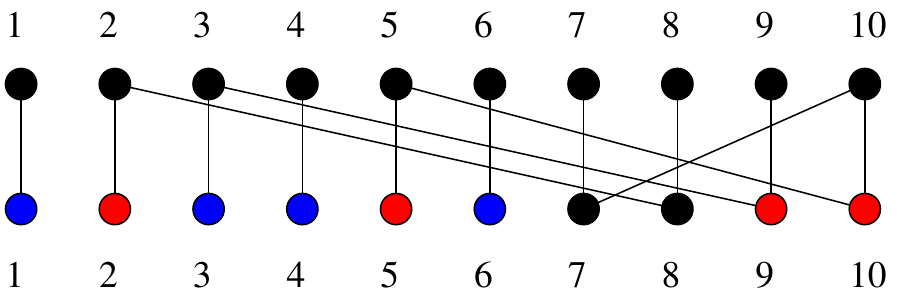}
\end{figure}
After the merging step, the graph is transformed into the graph given in Figure \ref{fig:example2}.

\begin{figure}[h!]
\caption{The graph obtained after the merging step \label{fig:example2}}
\centering
\includegraphics[height=4cm]{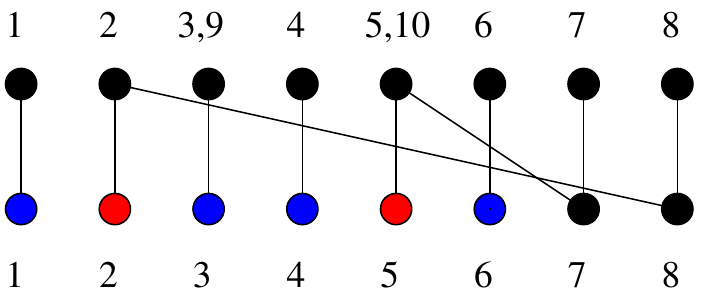}
\end{figure}
After the pruning step the graph further simplifies and becomes the graph given in Figure \ref{fig:example3}.
\begin{figure}[h!]
\caption{The graph obtained after the merging and the pruning step (here red vertices and edges do not belong to the graph, they are just here
to indicate which edges and nodes have been removed). \label{fig:example3}}
\centering
\includegraphics[height=4cm]{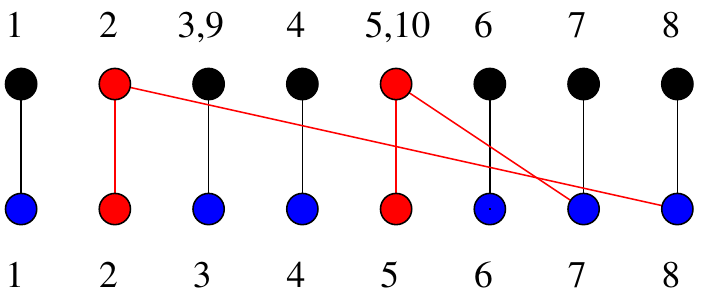}
\end{figure}

In this case
\begin{itemize}
\item ${\cal M}$ is a set of two merged nodes : $\{3,9\}$ and $\{5,10\}$;
\item the degree of both of these merged nodes is equal to $2$;
\item there remains no node of degree $2$ in $V$ after merging and pruning : $\Jind'_2 = \emptyset$.
\item two vertices of $V$ of degree $1$ have disappeared during the process $|\Ind_1|=2$.
\end{itemize}
If we assume that the dimension of the underlying GRS code was $6$ at the beginning we therefore expect a dimension of  $\dim \sqcp{\sh{\Ind}{\Cpub^\perp}}$
of
$$
\dim \sqcp{\sh{\Ind}{\Cpub^\perp}} = 3\times 6 - 1 -2\times 2 + 0 - 4 - 2 = 7
$$

\subsection{The relationship between $\dim \sqcp{\sh{\Ind}{\Cpub^\perp}}$ and $\dim \sqcp{\pu{i}{\sh{\Ind}{\Cpub^\perp}}}$}

A quick inspection of the reasoning underlying  Formula \eqref{eq:du_flan?} for $\dim \sqcp{\sh{\Ind}{\Cpub^\perp}}$ also shows that we expect that 
\begin{eqnarray*}
\dim \sqcp{\pu{i}{\sh{\Ind}{\Cpub^\perp}}} & =& 3(n-k) - 1- 2|\Ind_1| + |\Jind'_2|-|\Ind|-\sum_{x \in {\cal M}} (d(x)-1)\;\;\text{if $i \notin \Jind'_2$}\\
\dim \sqcp{\pu{i}{\sh{\Ind}{\Cpub^\perp}}} &= & 3(n-k) - 1- 2|\Ind_1| + |\Jind'_2|-1- |\Ind|-\sum_{x \in {\cal M}} (d(x)-1)\;\;\text{if $i \in \Jind'_2$}
\end{eqnarray*}

In other words we expect that 
\begin{eqnarray*}
\dim \sqcp{\sh{\Ind}{\Cpub^\perp}} - \dim \sqcp{\pu{i}{\sh{\Ind}{\Cpub^\perp}}} &= &0 \;\;\text{if $i \notin \Jind'_2$}\\
\dim \sqcp{\sh{\Ind}{\Cpub^\perp}} - \dim \sqcp{\pu{i}{\sh{\Ind}{\Cpub^\perp}}} &= &1 \;\;\text{if $i \in \Jind'_2$}
\end{eqnarray*}
The positions of degree $2$ that we detect by computing $\dim \sqcp{\sh{\Ind}{\Cpub^\perp}} - \dim \sqcp{\pu{i}{\sh{\Ind}{\Cpub^\perp}}}$
are therefore the elements of $\Jind'_2$, that is the vertices which remain of degree $2$ after merging and pruning the graph.

\section{Proof of Proposition \ref{pr:plus}}

\label{sec:plus}
We first notice that 
\begin{eqnarray*}
\CC & = & \Cpub^\perp \Dm_{\alpha,i_1,i_2}\\
& = & \Csec^\perp(\Tm^T\Dm(\alpha,i_1,i_2)
+\Rm^T\Dm(\alpha,i_1,i_2)
) \end{eqnarray*}
Note that $\Rm^T\Dm(\alpha,i_1,i_2)$ and its transpose are both of rank at most one since 
$\Rm$ is of rank $\leq 1$.
Let
$$
\Sm^T \eqdef \Tm^T\Dm(\alpha,i_1,i_2)
$$
Denote the entry in row $i$ and column $j$ of $\Tm,\Tm^T,\Sm^T,$ and $\Dm(\alpha,i_1,i_2)$ by 
$T_{ij},T^T_{ij},S^T_{ij}$, and $D(\alpha,i_1,i_2)_{ij}$
 respectively.
 From the very definition of $\Dm(\alpha,i_1,i_2)$, $\Sm^T$ and $\Tm^T$ coincide in all entries
 with the exception of the entry in row $j_1$ and column $i_2$ where we have
 \begin{eqnarray*}
 S^T_{j_1i_2} &= & \sum_s T^T_{j_1s}D(\alpha,i_1,i_2)_{si_2}\\
 & = & T^T_{j_1i_2}+ \alpha T^T_{j_1i_1}\\
 & = & T^T_{j_1i_2} - \frac{T^T_{j_1i_1} T_{i_2j_1}}{T_{i_1j_1}}\\
 & = & T^T_{j_1i_2} - \frac{T^T_{j_1i_1} T^T_{j_1i_2}}{T^T_{j_1i_1}}\\
 & = & 0
  \end{eqnarray*}
\section{Algorithms of the Attack}\label{sec:algo}

\begin{algorithm}[t!]
\caption{Algorithm for detecting the positions of degree $2$ \label{al:detection2}}
{\bf Function} {\tt IsADegree2Position}$(\CC, i, s_{\textrm{max}})$\\
{\bf requires:} 
\begin{itemize}
\item a code $\CC$ which is of the same form as $\Cpub^\perp$ 
of the BBCRS scheme;
\item $i$  a code position of $\CC$; 
\item a maximal number $s_{\textrm{max}}$ of tests.
\end{itemize}
{\bf Output:} {\tt yes} (if $i$ has degree $2$)/{\tt probably not} (if we think that $i$ has degree $1$).
\begin{algorithmic}
\FOR{$s=1$ {\bf to} $s_{\text{max}}$}
\STATE{$\Ind \leftarrow$ Random subset of $\{1,\dots,n\} \setminus \{i\}$
which satisfies \eqref{eq:dist_rule}}
\STATE{$\code{D} \leftarrow \sh{\Ind}{\CC}$}
\IF{{\tt Dimension}($\code{D}^2$) $\neq$ {\tt Dimension}($\pu{i}{\code{D}^2}$)}
\STATE{{\bf return} {\tt yes}}
\ENDIF
\ENDFOR
\STATE{{\bf return} {\tt  probably not}}
\end{algorithmic}
\end{algorithm}

\begin{algorithm}[t!]
\caption{Algorithm to compute quickly the sets $\Jind_1$ and $\Jind_2$ of positions of
degree $1$ and $2$ \label{al:compute_degree2}}
{\bf Function} {\tt Degree2andPositions}$(\CC, s_{\textrm{max}})$\\
{\bf requires:} 
\begin{itemize}
\item a code $\CC$ which is of the same form as $\Cpub^\perp$ 
of the BBCRS scheme;
\item a maximal number $s_{\textrm{max}}$ of tests.
\end{itemize}
{\bf Output:} The set $\Jind_2$ of positions of degree $2$.
\begin{algorithmic}
\STATE{$\Jind_1 \leftarrow \{1, \ldots, n\}$}
\STATE{$\Jind_2 \leftarrow \{\}$}
\FOR{$s=1$ {\bf to} $s_{\text{max}}$}
\STATE{$\Ind \leftarrow$ Random subset of $\{1,\dots,n\}$
which satisfies \eqref{eq:dist_rule}}
\STATE{$\code{D} \leftarrow \sh{\Ind}{\CC}$}
\FOR{$i \in \Jind_1 \setminus \Ind$}
\IF{{\tt Dimension}($\code{D}^2$) $\neq$ {\tt Dimension}($\pu{i}{\code{D}^2}$)}
\STATE{$\Jind_1 \leftarrow \Jind_1 \setminus \{i\}$}
\STATE{$\Jind_2 \leftarrow \Jind_2 \cup \{i\}$}
\ENDIF
\ENDFOR
\ENDFOR
\STATE{{\bf return} $\Jind_1, \Jind_2$}
\end{algorithmic}
\end{algorithm}

\begin{algorithm}[t!]
\caption{Algorithm to compute the set of positions of degree $2$ which are
associated to a given position $i_1$ of degree 1. That is positions $i_2$
such that $j(i_1) \in j(i_2)$ \label{al:Associated}}
{\bf Function} {\tt AssociatedDegree2Positions}$(\CC, i_1, \Jind_1, \Jind_2)$\\
{\bf requires:} 
\begin{itemize}
\item a code $\CC$ which is of the same form as $\Cpub^\perp$ 
of the BBCRS scheme;
\item The sets $\Jind_1, \Jind_2$ of positions of respective degrees
$1$ and $2$;
\item A position $i_1\in \Jind_1$;
\item a maximal number $s_{\textrm{max}}$ of tests.
\end{itemize}
{\bf Output:} The set of positions of degree $2$
associated to $i_1$.
\begin{algorithmic}
\STATE{$\CC_{i_1} \leftarrow \sh{i_1}{\CC}$}
\STATE{$\Jind_1, \Jind_2 \leftarrow $ {\tt Degree1and2Positions}$(\CC_{i_1}, s_{\textrm{max}})$}
\STATE{{\bf return} $\Jind_2 \setminus \Jind$}
\end{algorithmic}
\end{algorithm}

\begin{algorithm}[t!]
\caption{Algorithm to transform a degree $2$ position in a degree $1$
one \label{al:Eliminate}}
{\bf Function} {\tt EliminateDegree2Position}$(\CC, i_1, i_2, s_{\textrm{max}})$\\
{\bf requires:} 
\begin{itemize}
\item A code $\CC$ which is of the same form as $\Cpub^\perp$ 
of the BBCRS scheme; 
\item A position $i_2 \in \Jind_2$;
\item A position $i_1 \in \Jind_1$ associated to $i_2$;
\item A maximal number of tests $s_{\textrm{max}}$
\end{itemize}
{\bf Output:} A pair $(\CC', \alpha)$, where $\CC' = \CC \Dm(\alpha, i_1, i_2)$
and $\Dm (\alpha, i_1, i_2)$ is defined in Proposition \ref{pr:plus}.

\begin{algorithmic}
\FOR{$\alpha \in \F_q$}
\STATE{$\CC' \leftarrow \CC \Dm(\alpha, i_1, i_2)$}
\IF{{\tt IsADegree2Position}($\CC', i_2, s_{\textrm{max}}$) = {\tt false}}
\STATE{{\bf return} $\CC', \alpha$}
\ENDIF
\ENDFOR
\STATE{{\bf return} ``{\tt ERROR, $i_1$ is not associated to $i_2$}''}
\end{algorithmic}
\end{algorithm}

\begin{algorithm}[t!]
\caption{Complete algorithm for the attack \label{al:complete_attack}}
{\bf Function} {\tt CompleteAttack}$(\CC)$\\
{\bf requires:} 
\begin{itemize}
\item A code $\CC$ which is of the same form as $\Cpub^\perp$ 
of the BBCRS scheme. That is, $\CC = \GRS{n-k}{\xv}{\yv}(\Tm + \Rm)^T$
for some sparse matrix $\Tm$ and some rank one matrix $\Rm$.
The set of degree $1$ positions of $\CC$ contains an information set.
\item A maximal number of tests $s_{\textrm{max}}$
\end{itemize}

{\bf Output:} A tuple
$(\widetilde{\Tm}, \widetilde{\Rm}, \uv, \vv, \Ind)$ such that
$\pu{\Ind}{\CC} = \GRS{n-k}{\uv}{\vv}(\widetilde{\Tm} +\widetilde{\Rm})^T$
and with $\Ind$ as large as possible.
\begin{algorithmic}
\STATE{$\CC' \leftarrow \CC$}
\STATE{$\Jind_1, \Jind_2 \leftarrow $ {\tt Degree1and2Positions}($\CC, s_{\textrm{max}}$)}
\STATE{$\widetilde{\Tm} \leftarrow \Im$}
\COMMENT{$\Im$ is the $n\times n$ identity matrix.}
\WHILE{$\Jind_2 \neq \emptyset$ {\bf and} $\Jind_1 \neq \emptyset$ {\bf do}}
\STATE{$i_1 \leftarrow $ Random($\Jind_1$)}
\STATE{$\Jind' \leftarrow$ {\tt AssociatedDegree2Positions}($\CC', i_1, \Jind_1, \Jind_2$)}
\FOR{$i_2 \in \Jind'$ {\bf do}}
\STATE{$\CC'', \alpha \leftarrow $ {\tt EliminateDegree2Position}($\CC', i_1, i_2$)}
\STATE{$\CC' \leftarrow \CC''$}
\COMMENT{We replaced $\CC'$ by $\CC' \Dm (\alpha, i_1, i_2)$, where $\Dm(\alpha, i_1, i_2)$ is defined in Proposition \ref{pr:plus}}
\STATE{$\widetilde{\Tm}\leftarrow \Dm(\alpha, i_1, i_2)^{-1} \widetilde{\Tm}$}
\COMMENT{We preserve the loop invariant $\CC' = \CC \widetilde{T}$}
\STATE{$\Jind_2 \leftarrow \Jind_2 \setminus \{i_2\}$}
\STATE{$\Jind_1 \leftarrow \Jind_1 \setminus \{i_1\}$}
\STATE{$\Jind_1 \leftarrow \Jind_1 \cup \{i_2\}$}
\ENDFOR
\ENDWHILE
\STATE{$\code{C}' \leftarrow \pu{\Jind_2}{\code{C}'}$}
\STATE{$\widetilde{\Tm} \leftarrow $ {\tt Puncture}$(\widetilde{\Tm},\Jind_2)$}
\COMMENT{We drop the columns and the rows of $\widetilde{\Tm}$ that belong to $\Jind_2$.}\\
\COMMENT{At this point, there exist $\uv,\vv,\widetilde{\Pim},\widetilde{\Rm}$ with 
$\widetilde{\Rm}$ of rank one, $\widetilde{\Pim}$ being a permutation  matrix
such that $\pu{\Ind_2}{\code{C}} = \GRS{n-k}{\uv}{\vv}(\widetilde{\Pim}+\widetilde{\Rm})^T\widetilde{\Tm}$.}
\STATE{$(\widetilde{\Pim},\widetilde{\Rm},\uv,\vv) \leftarrow$ {\tt Attack}$(\code{C}')$}
\COMMENT{Here the algorithm of \cite[\S 4]{CGGOT:dcc14} is used and give a possible $\widetilde{\Pim}$,$\widetilde{\Rm}$, $\uv$ and $\vv$ that satisfy
$\code{C}' = \GRS{n-k}{\uv}{\vv}(\widetilde{\Pim}+\widetilde{\Rm})^T$. }
\STATE{{\bf return} $(\widetilde{\Tm}^T \widetilde{\Pim},\widetilde{\Tm}^T\widetilde{\Rm},\uv,\vv, \Jind_2)$}
\end{algorithmic}
\end{algorithm}


\end{document}